\tikzset{
	semithick,
	node distance = 2cm,
	dot/.style={circle,fill,inner sep=2pt}
}
\tikzset{
	side by side/.style 2 args={
		line width=2pt,
		#1,
		postaction={
			clip,postaction={draw,#2}
		}
	}
}
\tikzstyle{every state}=[draw = black,thick,fill = white,minimum size = 4mm]
\tikzstyle{selected edge} = [draw,line width=2pt,-,red!50]
\tikzset{
	%	vertex/.style={circle,draw,minimum size=1.5em},
	edge/.style={->,> = latex'}
}
\newcommand{\cA}{{\mathcal{A}}}
\newcommand{\cI}{{\mathcal{I}}}
\newcommand{\cm}{{\mathcal{M}}}
\newcommand{\ck}{{\mathcal{K}}}
\newcommand{\lm}{$\ell$-matchoid~}
\newcommand{\cC}{{\mathcal{C}}}
\newcommand{\OPT}{\textnormal{OPT}}
\newcommand{\eps}{{\varepsilon}}
\title{Budgeted Matroid Maximization: $\!\!$ a Parameterized Viewpoint
}
\titlerunning{Budgeted Matroid Maximization: Parameterized Viewpoint
} 
\author{Ilan Doron-Arad}{Computer Science Department, 
	Technion, Haifa, Israel.}{idoron-arad@cs.technion.ac.il}{}{}
\author{Ariel Kulik}{CISPA Helmholtz Center for Information Security, Germany.}{ariel.kulik@cispa.de}{}{Research supported by the European Reseach Concil (ERC) consolidator grant no. 725978 SYSTEMATICGRAPH}
\author{Hadas Shachnai}{Computer Science Department, 
	Technion, Haifa, Israel.}{hadas@cs.technion.ac.il}{}{}
\authorrunning{I. Doron-Arad, A. Kulik, and H. Shachnai} %TODO mandatory. First: Use abbreviated first/middle names. Second (only in severe cases): Use first author plus 'et al.'
\keywords{budgeted matching, budgeted matroid intersection, knapsack problems, FPT-approximation scheme.
	%EPTAS
} %TODO mandatory; please add comma-separated list of keywords
\begin{document}

\maketitle

%TODO mandatory: add short abstract of the document
\begin{abstract}
We study budgeted variants of well known maximization problems with multiple matroid constraints. Given an $\ell$-matchoid $\cm$ on a ground set $E$, a profit function $p:E \rightarrow \mathbb{R}_{\geq 0}$, a cost function $c:E \rightarrow \mathbb{R}_{\geq 0}$, and a budget $B \in \mathbb{R}_{\geq 0}$, the goal is to find 
in the $\ell$-matchoid a feasible set $S$ of maximum profit $p(S)$ subject to the budget constraint, i.e., $c(S) \leq B$. The {\em budgeted $\ell$-matchoid} (BM) problem includes as special cases budgeted $\ell$-dimensional matching and budgeted $\ell$-matroid intersection. A strong motivation for studying BM  from parameterized viewpoint comes from the APX-hardness of unbudgeted
$\ell$-dimensional matching (i.e., $B = \infty$) already for $\ell = 3$.
Nevertheless, while there are known FPT algorithms for the unbudgeted variants of the above problems, the {\em budgeted} variants are studied here
for the first time through the lens of parameterized complexity.

We show that BM parametrized by solution size is $W[1]$-hard, already with a degenerate single matroid constraint. Thus, an exact parameterized algorithm is unlikely to exist,  motivating the study of {\em FPT-approximation schemes} (FPAS). Our main result is an FPAS for BM (implying an FPAS for $\ell$-dimensional matching and budgeted $\ell$-matroid intersection), relying on the notion of representative set $-$ a small cardinality  subset of elements which  preserves the optimum up to a small factor. We also give a lower bound on the minimum possible size of a representative set which can be computed in polynomial time. 
\end{abstract}

\section{Introduction}
\label{sec:introduction}

Numerous combinatorial optimization problems
can be interpreted as {\em constrained budgeted problems}. In this setting,  we are given a ground set $E$ of elements and a family $\cI \subseteq 2^E$ of subsets of $E$ known as the {\em feasible sets}. 
We are also given 
a cost function $c:E\rightarrow \mathbb{R}$, a profit function $p:E\rightarrow \mathbb{R}$, and a budget $B \in \mathbb{R}$. A {\em solution} is a feasible set $S \in \cI$ of bounded cost $c(S) \leq B$.\footnote{For a function $f:A \rightarrow \mathbb{R}$ and a subset of elements $C \subseteq A$, define $f(C) = \sum_{e \in C} f(e)$.} Broadly speaking, the goal is to find a solution $S$ of maximum profit.
Notable examples include budgeted matching \cite{BBGS11} and budgeted matroid intersection \cite{chekuri2011multi,grandoni2010approximation}, %submodular maximization under matroid and budget constraints \cite{lee2009non}, 
shortest weight-constrained path \cite{garey1979computers}, and constrained minimum spanning trees \cite{ravi1996constrained}. 

Despite the wide interest in constrained budgeted problems in approximation algorithms, not much is known about this intriguing
family of problems in terms of parameterized complexity. In this work, we study 
budgeted maximization with the fairly general
$\ell$-{\em dimensional matching}, $\ell$-{\em matroid intersection}, and $\ell$-{\em matchoid} constraints. 

An $\ell$-dimensional matching constraint is a set system $(E,\cI)$, where $E \subseteq U_1 \times \ldots \times U_{\ell}$ for $\ell$ sets $U_1, \ldots, U_\ell$. The feasible sets $\cI$ are all subsets $S \subseteq E$ which satisfy the following.
For any two distinct tuples $(e_1,\ldots, e_{\ell}), (f_1,\ldots, f_{\ell}) \in S$ and every $i \in [\ell]$ it holds that $e_i \neq f_i$.\footnote{For any $k \in \mathbb{N}$ let $[k] = \{1,2,\ldots,k\}$.} Informally, the input for {\em budgeted $\ell$-dimensional matching} is an $\ell$-dimensional matching constraint $(E,\cI)$, profits and costs for the elements in $E$, and a budget. The objective is to find a feasible set which maximizes the profit subject to the budget constraint (see below the formal definition). 

We now define an $\ell$-matroid intersection.
 A {\em matroid} is a set system $(E, \cI)$, where $E$ is a finite set and $\cI \subseteq 2^E$, such that 
 \begin{itemize}
	\item $\emptyset \in \cI$.
	\item The {\em hereditary property}: for all $A \in \cI$ and $B \subseteq A$ it holds that $B \in \cI$.
	\item The {\em exchange property}: for all $A,B \in \cI$ where $|A| > |B|$ there is $e \in A \setminus B$ such that $B \cup \{e\} \in \cI$. 
\end{itemize}

For a fixed $\ell \geq 1$, 
let $(E,\cI_1), (E,\cI_2), \ldots, (E,\cI_{\ell})$ be $\ell$ matroids on the same ground set $E$. An $\ell$-matroid intersection is a set system $(E,\cI)$ where 
$\cI = \cI_1 \cap \cI_2 \cap \ldots \cap \cI_{\ell}$.  
Observe that $\ell$-dimensional matching, where $E \subseteq U_1 \times \ldots \times U_{\ell}$, is a special case of $\ell$-matroid intersection: For each $i \in [\ell]$, define a partition matroid $(E,\cI_i)$, where any
feasible set $S \in \cI_i$ may contain each element $e \in U_i$ in the $i$-th coordinate at most once, i.e.,
 $$\cI_i = \{S \subseteq E~|~ \forall (e_1,\ldots, e_{\ell}) \neq (f_1,\ldots, f_{\ell}) \in S : e_i \neq f_i\}.$$
 We give an illustration in Figure~\ref{fig:Pi}. 
 It can be shown that $(E,\cI_i)$ is a matroid for all $ i \in \ell$ (see, e.g., \cite{schrijver2003combinatorial}).

 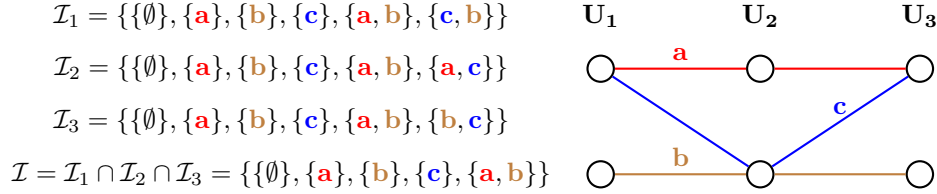
\begin{figure} 
 	\centering
 	%	\vspace{-4cm} 
 	\begin{tikzpicture}[thick, scale=1.4, every node/.style={draw, circle, inner sep=1pt}]
 	\node (00) at (4,3) {\bf ~~~}; %{$\bf \textcolor{black}{1}$};
 	\node (01) at (4,4)  {\bf ~~~}; %{$\bf \textcolor{black}{2}$};
 	\node (11) at (5.5,4)  {\bf ~~~}; %{$\bf \textcolor{black}{1}$};
 	\node (22) at (7,4)  {\bf ~~~}; %{$\bf \textcolor{black}{2}$};
 	\node (33) at (5.5,3)  {\bf ~~~}; %{$\bf \textcolor{red}{3}$};
 	\node (44) at (7,3)  {\bf ~~~}; %{$\bf \textcolor{blue}{4}$};
 	\draw[color=red] (11) -- (22);
 	\draw[color=red] (01) -- node[draw=none,midway, above] {$\bf \textcolor{red}{a}$} (11);
 	\draw[color=brown] (00) --node[draw=none,midway, above] {$\bf \textcolor{brown}{b}$} (33);
 	%	\draw (11) -- (33);
 	%	\draw (11) -- (44);
 	%	\draw (22) -- (44);
 	\draw[color=brown] (33) -- (44);
 	\draw[color=blue] (01) -- (33);
 	\draw[color=blue] (33) --node[draw=none,midway, above] {$\bf \textcolor{blue}{c}$} (22);
 	%	\node[draw=none,ultra thick] at (9, 3.5) {$~\Pi = \{\textnormal{independent sets of } G\}$};

 	\node[draw=none,ultra thick] at (4, 4.5) {$\bf \textcolor{black}{U_1}$};
 	
 	\node[draw=none,ultra thick] at (5.5, 4.5) {$\bf \textcolor{black}{U_2}$};

 	\node[draw=none,ultra thick] at (7, 4.5) {$\bf \textcolor{black}{U_3}$};
 	%	\node[draw=none] at (4.5, 3.5) {A graph $G$};
 	% Formula
 	\node[draw=none] at (1, 4.5) {$\mathcal{I}_{1} = \left\{ \{\emptyset\}, \{\bf \textcolor{red}{a}\}, \{\bf \textcolor{brown}{b}\},\{\bf \textcolor{blue}{c}\},\{\bf \textcolor{red}{a},\bf \textcolor{brown}{b}\},  \{\bf \textcolor{blue}{c},\bf \textcolor{brown}{b}\} \right\}$};
 	
 	\node[draw=none] at (1, 4) {$\mathcal{I}_{2} = \left\{ \{\emptyset\}, \{\bf \textcolor{red}{a}\}, \{\bf \textcolor{brown}{b}\},\{\bf \textcolor{blue}{c}\},\{\bf \textcolor{red}{a},\bf \textcolor{brown}{b}\}, \{\bf \textcolor{red}{a},\bf \textcolor{blue}{c}\}\right\}$};
 	
 	\node[draw=none] at (1, 3.5) {$\mathcal{I}_{3} = \left\{ \{\emptyset\}, \{\bf \textcolor{red}{a}\}, \{\bf \textcolor{brown}{b}\},\{\bf \textcolor{blue}{c}\},\{\bf \textcolor{red}{a},\bf \textcolor{brown}{b}\}, \{\bf \textcolor{brown}{b},\bf \textcolor{blue}{c}\}\right\}$};
 	
 	\node[draw=none] at (1, 3) {$\mathcal{I} = \cI_1 \cap \cI_2 \cap \cI_3 = \left\{ \{\emptyset\}, \{\bf \textcolor{red}{a}\}, \{\bf \textcolor{brown}{b}\},\{\bf \textcolor{blue}{c}\},\{\bf \textcolor{red}{a},\bf \textcolor{brown}{b}\}\right\}$};

 	%			 ~~\mathcal{K}_{4,2,5} = \{\{$\bf \textcolor{black}{2}$,$\bf \textcolor{blue}{4}$\},\{$\bf \textcolor{black}{2}$,$\bf \textcolor{brown}{1}$\},\{$\bf \textcolor{blue}{4}$,$\bf \textcolor{red}{3}$\},\{$\bf \textcolor{red}{3}$,$\bf \textcolor{brown}{1}$\}\}, ~~\mathcal{L}_{4,2,5}(\Pi) = \{\{$\bf \textcolor{black}{2}$,$\bf \textcolor{red}{3}$\}\}$};
 	\end{tikzpicture}
 	\vspace{-3cm} 
 	\caption{\label{fig:Pi} A $3$-dimensional matching viewed as a $3$-matroid intersection. Each element in $E \subset U_1 \times U_2 \times U_3$ is represented by a path of a different color, that is $E = \{a,b,c\}$. There is a matroid constraint $(E,\cI_i)$ for each $U_i, i = 1,2,3$. The feasible sets for the matching are exactly the common independent sets of $\cI_i,i = 1,2,3$.}
 \end{figure}

The above constraint families can be generalized to the notion of $\ell$-{\em matchoid}. Informally, an \lm is  an intersection of an unbounded number of matroids, where each element belongs to at most $\ell$ of the matroids. Formally,
for any  $\ell \geq 1$, an \lm on a set $E$ is a collection $\cm = \left\{  M_i = (E_i, \cI_i) \right\}_{i \in [s]}$ of $s \in \mathbb{N}$ matroids, where for each $i \in [s]$ it holds that $E_i \subseteq E$, and every $e \in E$ belongs to at most $\ell$ sets in $\{E_1, \ldots, E_s\}$, i.e., $|\{i\in [s] ~|~ e\in E_i\}| \leq \ell$. 
A set $S \subseteq E$ is {\em feasible} for $\cm$ if for all $i \in [s]$ it holds that $S \cap E_i \in \cI_i$. Let $\cI(\cm) = \{S\subseteq E ~|~  \forall i \in [s]:~S\cap E_i\in \cI_i\}$ be all feasible sets of $\cm$. For all $k \in \mathbb{N}$, we use $\cm_k \subseteq \cI(\cm)$ to denote all feasible sets of $\cm$ of cardinality at most $k$. Clearly, $\ell$-matroid intersection (and also $\ell$-dimensional matching) is the special case of \lm where the $s (= \ell)$ matroids are defined over the same ground set $E$. 

In the {\em budgeted $\ell$-matchoid (BM)} problem, we are given an $\ell$-matchoid along with a cost function, profit function, and a budget; our goal is to maximize the profit of a feasible set under the budget constraint. The {\em budgeted $\ell$-matroid intersection (BMI)} and  {\em budgeted $\ell$-dimensional matching (BDM)} are the special cases where the \lm is an $\ell$-matroid intersection and $\ell$-dimensional matching, respectively. Each of these problems generalizes the classic $0/1$-knapsack, where all sets are feasible. Figure~\ref{fig:diagram} shows the relations between the problems. Henceforth, we focus on the BM problem.

 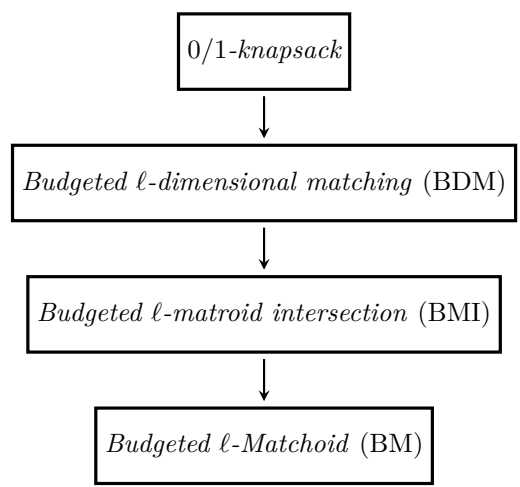
\begin{figure}\hspace*{6cm}
	\label{fig:diagram1}
	\begin{tikzpicture}[
		terminal/.style={
			% The shape:
			rectangle,
			% The size:
			minimum width=2cm,
			minimum height=1cm,
			% The border:
			very thick,
			draw=black,
			% Font
			font=\itshape,
		},
		]
		\matrix[row sep=0.7cm,column sep=-5cm] {%
			% First row:
			&; \node [terminal](KP) {$0/1$-knapsack};  &\\
			
		%	\node [terminal](MCK) {Multiple-choice knapsack {\bf FPTAS}}; & & \node [terminal](kKP) {Cardinality constrained knapsack {\bf FPTAS}}; \\%;   & \\
			%Second row
			&\node [terminal](PKP) {Budgeted $\ell$-dimensional matching \textnormal{(BDM)}}; &\\
			&\node [terminal](BLM) {Budgeted $\ell$-matroid intersection \textnormal{(BMI)}}; \\
			& \node [terminal](BMI) {Budgeted $\ell$-Matchoid \textnormal{(BM)}}; \\
			% Third row:
			%\node [terminal](p2) {Cardinality Constrained Knapsack};  &\\
		};
		%	\draw   (p1) edge [->,>=stealth,shorten <=2pt, shorten >=2pt,thick] (p3)
		%	(p2) edge [->,>=stealth,shorten <=2pt, shorten >=2pt, thick] (p3);
		%	\draw   (p3) edge [->,>=stealth,shorten <=2pt, shorten >=2pt,thick] (p4);       
	%	\draw (KP) edge [->,>=stealth,shorten <=2pt, shorten >=2pt, thick] (kKP);
	%	\draw (KP) edge [->,>=stealth,shorten <=2pt, shorten >=2pt, thick] (MCK);
%		\draw (MCK) edge [->,>=stealth,shorten <=2pt, shorten >=2pt, thick] (PKP);
%		\draw (kKP) edge [->,>=stealth,shorten <=2pt, shorten >=2pt, thick] (PKP);

%	&\node [terminal,draw=red](BLM) {Budgeted laminar matroid independent set (BLM) {\bf FPTAS} \textnormal {(this paper)}}; \\
		\draw (KP) edge [->,>=stealth,shorten <=2pt, shorten >=2pt, thick] (PKP);
		\draw (PKP) edge [->,>=stealth,shorten <=2pt, shorten >=2pt, thick] (BLM);
		\draw (BLM) edge [->,>=stealth,shorten <=2pt, shorten >=2pt, thick] (BMI);
	\end{tikzpicture}
	\caption{\label{fig:diagram} An overview of constrained budgeted problems. An arrow from problem $A$ to problem $B$ indicates that $A$ is a special case of $B$.}
\end{figure}

Formally, a BM instance is a tuple $I = (E, \cm, c,p, B,k,\ell)$, where $E$ is a ground set of elements, $\cm$ is an \lm on $E$, $c:E \rightarrow \mathbb{N}_{> 0}$ is a cost function, $p:E \rightarrow \mathbb{N}_{> 0}$ is a profit function, $B \in \mathbb{N}_{> 0}$ is a budget, and $k,\ell \in  \mathbb{N}_{> 0}$ are integer parameters.\footnote{We assume integral values for simplicity; our results can be generalized also for real values.} 
 In addition, each matroid $(E_i,\cI_i) \in \cm$ has a {\em membership oracle}, which tests whether a given subset of $E_i$ belongs to $\cI_i$ or not in a single query.
A {\em solution} of $I$ is a feasible set $S \in \cm_{k}$ such that $c(S) \leq B$. The objective is to find a solution $S$ of $I$ such that $p(S)$ is maximized. We consider algorithms parameterized by $k$ and $\ell$ (equivalently, $k+\ell$). 

We note that even with no budget constraint (i.e., $c(E)< B$), where the \lm is restricted to be a $3$-dimensional matching, BM
is  MAX SNP-complete \cite{kann1991maximum}, i.e., it cannot admit a {\em polynomial time approximation scheme (PTAS)} unless P=NP. On the other hand, the  $\ell$-dimensional matching and even the \lm problem (without a budget), parameterized by $\ell$ and the solution size $k$, are {\em fixed parameter tractable (FPT)} \cite{goyal2015deterministic,huang2023fpt}. 
This motivates our study of BM through the lens of parameterized complexity. We first observe that BM parameterized by the solution size is W[1]-hard, already with a degenerate matroid where 
all sets are feasible (i.e., knapsack parametrized by the cardinality of the solution, $k$). 

	\begin{lemma}
	\label{lem:hard}
	\textnormal{BM} is $W[1]$-\textnormal{hard}.
\end{lemma}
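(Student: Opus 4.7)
The plan is to exhibit a parameterized reduction from $k$-Subset Sum to BM that uses only a single trivial matroid. Recall that $k$-Subset Sum---given positive integers $a_1, \ldots, a_n$, a target $T$, and a parameter $k$, decide whether there is $S \subseteq [n]$ with $|S| = k$ and $\sum_{i \in S} a_i = T$---is W[1]-hard when parameterized by $k$, a standard consequence of the W[1]-hardness of $k$-Sum together with a shift of weights to ensure positivity.

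Given such an instance, I would construct a BM instance $I = (E, \cm, c, p, B, k, \ell)$ by setting $E = [n]$, taking $\cm$ to consist of a single ``free'' matroid $(E, 2^E)$ in which every subset is independent (so $\ell = 1$ and $\cm_k$ collapses to $\{S \subseteq E \mid |S| \leq k\}$), defining costs and profits by $c_i = (n+1)\, a_i + 1$ and $p_i = c_i$, and setting the budget to $B = (n+1)\, T + k$. This transformation is computable in polynomial time and preserves the parameter $k$, so it is a parameterized reduction.

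For correctness, the key identity is
\[
c(S) \;=\; (n+1) \sum_{i \in S} a_i \;+\; |S|,
\]
where the low-order summand $|S| \leq k \leq n$ cannot carry into the high-order part encoded in base $(n+1)$. A short case analysis then shows that a feasible $S$ with $c(S) = B$ must satisfy $\sum_{i \in S} a_i = T$ and $|S| = k$: any $|S| < k$ with the correct sum gives $c(S) < B$, while any choice with $\sum_{i\in S} a_i > T$ exceeds the budget. Since $p = c$, maximizing the profit under the budget is equivalent to deciding whether $B$ is attainable, which is precisely whether the $k$-Subset Sum instance is a yes-instance.

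The main obstacle is reconciling BM's ``at most $k$'' cardinality bound with $k$-Subset Sum's ``exactly $k$'' requirement---without extra care, the BM optimum could be realized by a subset of fewer than $k$ items that happens to sum to $T$. The scaling by $(n+1)$ combined with the $+1$ additive term handles exactly this: it embeds $|S|$ into the low-order digit of $c(S)$ in base $(n+1)$, so only subsets of size exactly $k$ whose $a_i$ sum to $T$ can attain the budget $B$, and hardness follows.
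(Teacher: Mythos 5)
Your proof is correct and takes essentially the same route as the paper: a parameterized reduction from $k$-Subset Sum to BM with a single free matroid ($\ell=1$), costs equal to profits, and a numerical encoding that forces any budget-attaining set to have exactly $k$ elements. The only difference is cosmetic---you embed the cardinality in the low-order digit via the scaling $c_i=(n+1)a_i+1$, whereas the paper adds a large constant $2\sum_j x_j$ to each item so that the cardinality dominates the high-order part; both devices serve the identical purpose and the rest of the argument coincides.
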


By the hardness result in Lemma~\ref{lem:hard}, the best we can expect for BM in terms of parametrized algorithms, is an {\em FPT-approximation scheme (FPAS)}. An FPAS with parameterization $\kappa$ for a maximization problem $\Pi$ is an algorithm whose input is an instance $I$ of $\Pi$ and an $\eps>0$, which produces a solution $S$ of $I$ of value $(1-\eps) \cdot \OPT(I)$ in time $f(\eps,\kappa(|I|)) \cdot |I|^{O(1)}$ for some computable function $f$, where $|I|$ denotes the encoding size of $I$ and $\OPT(I)$ is the optimum value of $I$. We refer the reader to \cite{marx2008parameterized,FLM20} for comprehensive surveys on parameterized approximation schemes and parameterized approximations in general. To derive an FPAS for BM, we use a small cardinality {\em representative set}, %. A {\em representative set} 
which is a subset of elements containing the elements of an {\em almost} optimal solution for the instance. The representative set has a cardinality depending solely on $\ell,k,\eps^{-1}$ and is constructed in FPT time. Formally,
\begin{definition}
	\label{def:REP}
	Let $I = (E, \cm, c,p, B,k,\ell)$ be a \textnormal{BM} instance, $0<\eps<\frac{1}{2}$ and $R \subseteq E$. %be a representatives set of $I$ and $\eps$. 
	Then $R$ is a {\em representative set} of $I$ and $\eps$ if there is a solution $S$ of $I$ such that the following holds.
	\begin{enumerate}
		\item $S \subseteq R$. 
		\item $p\left(S\right) \geq (1-2\eps) \cdot \OPT(I)$.
	\end{enumerate} 
\end{definition}

We remark that Definition~\ref{def:REP} slightly resembles the definition of {\em lossy kernel} \cite{lokshtanov2017lossy}. Nonetheless, the definition of lossy kernel does not apply to problems in the oracle model, including BM (see Section~\ref{sec:discussion} for further details). 

The main technical contribution of this paper is the design of a small cardinality representative set for BM. Our representative set is constructed by forming a collection of $f(\ell, k,\eps^{-1})$ {\em profit classes}, where the elements of each profit class have roughly the same profit. Then, to construct a representative set for the instance, we define a residual problem for each profit class which enables to circumvent the budget constraint. These residual problems can be solved efficiently using a construction of~\cite{huang2023fpt}. We show that combining the solutions for the residual problems, we obtain a representative set. In the following, we use $\Tilde{O}(n)$ for $O(n \cdot \textnormal{poly}(\log (n)))$.   

 \begin{lemma}
	\label{lem:Main}
	 There is an algorithm that given a \textnormal{BM} instance $I = (E, \cm, c,p, B,k,\ell)$ and $0<\eps <\frac{1}{2}$, returns in time $|I|^{O(1)}$ a representative set $R \subseteq E$ of $I$ and $\eps$ such that $|R| = \Tilde{O}\left(\ell^{(k-1) \cdot \ell} \cdot k^2 \cdot \eps ^{-2} \right)$. %Moreover, if $\alpha = \alpha$ then $R$ is a representative set of $I$ and $\eps$.  
\end{lemma}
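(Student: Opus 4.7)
The plan is to discretize the profit function, partition the elements into a small number of \emph{profit classes} within which all profits are essentially equal, and then, independently for each class, build a small minimum-cost representative family that lets us swap the class's contribution in an (approximately) optimal solution without losing matchoid feasibility or inflating cost. The representative set of the lemma is then the union of these families over all classes and cardinalities.

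To discretize, I would enumerate guesses $\alpha$ of the optimum as powers of $1+\eps$ in the range $[p_{\max}, k\cdot p_{\max}]$, an $O(\eps^{-1}\log k)$ size set that contains some $\alpha$ with $\alpha\le \OPT\le (1+\eps)\alpha$. Fix such an $\alpha$; drop every element of profit below $\eps\alpha/k$, losing at most $\eps\OPT$ since any solution has at most $k$ elements. Round each remaining profit down to a multiple of $\eps\alpha/k$, producing $h = O(k/\eps)$ distinct rounded values and losing a further $\eps\OPT$. Let $P_1,\ldots,P_h$ be the resulting profit classes.

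For every class $P_j$ and every cardinality $t\in\{0,1,\ldots,k\}$, I would feed the $\ell$-matchoid $\cm$ restricted to $P_j$, together with $c$ as weights, to the deterministic construction of~\cite{huang2023fpt}, producing a \emph{minimum-weight representative family} $R_{j,t}\subseteq P_j$ of size $O(\ell^{(k-1)\ell})$ with the following guarantee: for every $Q\subseteq E\setminus P_j$ with $|Q|\le k-t$ and every $T\subseteq P_j$ of size $t$ with $T\cup Q\in\cI(\cm)$, there exists $T'\subseteq R_{j,t}$ of size $t$ with $T'\cup Q\in\cI(\cm)$ and $c(T')\le c(T)$. Taking
\[
R \;=\; \bigcup_{\alpha}\bigcup_{j=1}^{h}\bigcup_{t=0}^{k} R_{j,t}^{\alpha},
\]
the size is $O(\eps^{-1}\log k \cdot k\eps^{-1}\cdot k\cdot \ell^{(k-1)\ell}) = \tilde O(\ell^{(k-1)\ell}\cdot k^2\cdot \eps^{-2})$ as required, and the whole construction runs in polynomial time.

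To verify the representative-set property, fix the correct $\alpha$, let $S^*$ be an optimum of the rounded-profit instance, and set $t_j=|S^*\cap P_j|$. Process the classes in order $j=1,\ldots,h$, maintaining the invariant that the hybrid set (already-replaced prefix $\cup$ untouched suffix) is feasible. At stage $j$, put $Q=\bigcup_{i<j}T_i\cup\bigcup_{i>j}(S^*\cap P_i)$; the invariant gives $(S^*\cap P_j)\cup Q\in\cI(\cm)$, and the representative-family guarantee supplies $T_j\subseteq R_{j,t_j}$ with $T_j\cup Q$ feasible and $c(T_j)\le c(S^*\cap P_j)$. After processing all classes, $S=\bigcup_j T_j\subseteq R$ is feasible of size $\le k$, satisfies $c(S)\le c(S^*)\le B$, and has the same rounded profit as $S^*$, so $p(S)\ge(1-2\eps)\OPT$.

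The main obstacle is producing a min-weight representative family of the stated size inside each $P_j$: each $R_{j,t}$ must be built in ignorance of the other classes yet be able to absorb any completion coming from outside $P_j$. This is precisely the property a weighted representative family for the $\ell$-matchoid restricted to $P_j$ provides, and the construction of~\cite{huang2023fpt} delivers such a family of size $O(\ell^{(k-1)\ell})$ in polynomial time. Once this per-class ingredient is in place, the chained exchange together with the two profit losses (low-profit pruning and profit rounding) yields the claim.
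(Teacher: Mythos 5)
Your proposal follows essentially the same route as the paper: restrict to sufficiently profitable elements, partition them into $\tilde{O}(k\,\ell\,\eps^{-2})$ profit classes, build a small per-class exchange/representative family via the construction of \cite{huang2023fpt}, take the union, and prove representativeness by a class-by-class swapping argument; the paper differs only in obtaining $\alpha$ from a polynomial-time $\frac{1}{2\ell}$-approximation rather than by enumerating guesses, and in using geometric rather than arithmetic profit classes. One caveat: for a matchoid given by independence oracles, \cite{huang2023fpt} supplies a \emph{single-element} exchange set (replace one $a\in\Delta$ by some $b$ of no larger cost so that $\Delta-a+b$ remains feasible), not the whole-$t$-set weighted representative family $R_{j,t}$ you invoke; the set-level replacement you need does follow by iterating the single-element exchange (the paper does this via an extremal substitution argument), so your argument is repairable without changing its structure.
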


Given a small cardinality representative set, it is easy to derive an FPAS. Specifically, using an exhaustive enumeration over the representative set as stated in Lemma~\ref{lem:Main}, we can construct the following FPAS for BM, which naturally applies also for BMI and BDM.

	\begin{lemma}
	\label{thm:FPAS}
	For any \textnormal{BM} instance $I = (E, \cm, c,p, B,k,\ell)$ and $0<\eps<\frac{1}{2}$, there is an \textnormal{FPAS}  whose running time is $|I|^{O(1)} \cdot  \Tilde{O} \left( \ell^{k^2 \cdot \ell} \cdot k^{O(k)} \cdot \eps^{-2k} \right)$. 
\end{lemma}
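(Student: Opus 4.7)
My plan is to derive the FPAS by exhaustive enumeration over the representative set produced by Lemma~\ref{lem:Main}. First, I would invoke Lemma~\ref{lem:Main} with accuracy parameter $\eps' = \eps/2$, obtaining in time $|I|^{O(1)}$ a representative set $R \subseteq E$ of cardinality $|R| = \Tilde{O}\left(\ell^{(k-1) \cdot \ell} \cdot k^2 \cdot (\eps')^{-2}\right) = \Tilde{O}\left(\ell^{(k-1) \cdot \ell} \cdot k^2 \cdot \eps^{-2}\right)$. By Definition~\ref{def:REP}, there exists a solution $S^{*}$ of $I$ with $S^{*} \subseteq R$ and $p(S^{*}) \geq (1 - 2\eps') \cdot \OPT(I) = (1-\eps) \cdot \OPT(I)$; in particular, $S^{*} \in \cm_{k}$ and $c(S^{*}) \leq B$.

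Second, I would iterate over every subset $T \subseteq R$ of cardinality at most $k$, test whether $T$ is a solution of $I$, and retain the $T$ of maximum profit among those tested. Feasibility in the \lm\ $\cm = \{(E_i,\cI_i)\}_{i \in [s]}$ is verified by querying, for each $i \in [s]$, the membership oracle of $(E_i,\cI_i)$ on $T \cap E_i$; the budget constraint $c(T) \leq B$ is checked by direct summation. Both checks take $|I|^{O(1)}$ time. Since the enumeration considers every subset of $R$ of size at most $k$, it in particular examines $S^{*}$, so the returned solution $S$ satisfies $p(S) \geq p(S^{*}) \geq (1-\eps) \cdot \OPT(I)$.

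The running time is dominated by the enumeration phase. The number of subsets of $R$ of size at most $k$ is at most $(k+1) \cdot |R|^{k} = \Tilde{O}\left(\ell^{k \cdot (k-1) \cdot \ell} \cdot k^{2k+1} \cdot \eps^{-2k}\right)$, and each subset is processed in $|I|^{O(1)}$ time. Using $k(k-1) \leq k^{2}$ and absorbing lower-order polynomial factors in $k$, the total running time is bounded by $|I|^{O(1)} \cdot \Tilde{O}\left(\ell^{k^{2} \cdot \ell} \cdot k^{O(k)} \cdot \eps^{-2k}\right)$, which matches the claimed bound.

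The argument is essentially routine once Lemma~\ref{lem:Main} is available; the only subtlety is the rescaling $\eps' = \eps/2$ needed to convert the $(1-2\eps')$ guarantee of the representative set into the $(1-\eps)$ approximation required by an \textnormal{FPAS}. Since feasibility testing is cheap in the matroid oracle model and the enumeration strictly contains the witness solution $S^{*}$ promised by Definition~\ref{def:REP}, no genuine obstacle arises beyond this bookkeeping.
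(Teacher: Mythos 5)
Your proposal is correct and follows essentially the same route as the paper: the paper's Algorithm~\ref{alg:EPTAS} likewise builds the representative set via \textsf{RepSet}, enumerates all subsets of $R$ of size at most $k$, keeps the best feasible one, and then obtains the stated lemma by rescaling the error parameter to $\eps/2$ (the only cosmetic difference being that the paper first proves a $(1-2\eps)$-guarantee and rescales afterwards, while you rescale up front). The counting of subsets and the feasibility checks match the paper's argument in Lemmas~\ref{thm:aux1} and~\ref{thm:aux2}.
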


To complement the above construction of a representative set, we show that even for the special case of an $\ell$-dimensional matching constraint, it is unlikely that a representative set of significantly smaller cardinality can be constructed in polynomial time. The next result applies to the special case of BDM. 

 \begin{lemma}
\label{lem:DM}	
For any function $f:\mathbb{N} \rightarrow \mathbb{N}$, and  $c_1,c_2 \in \mathbb{R}$ such that $c_2-c_1<0$, there is no algorithm which finds for a
given \textnormal{BM} instance $I = (E, \cm, c,p, B,k,\ell)$ and $0<\eps<\frac{1}{2}$
a representative set of size $O \left( f(\ell)  \cdot k^{\ell-c_1} \cdot  \frac{1}{\eps^{c_2}} \right)$ of $I$ and $\eps$ in time $|I|^{O(1)}$,
  unless $ \textnormal{coNP} \subseteq \textnormal{NP} / \textnormal{poly}$.
\end{lemma}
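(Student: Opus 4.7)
The plan is to reduce from a known polynomial kernel-size lower bound for $\ell$-dimensional matching parameterized by solution size. Suppose, toward a contradiction, that such an algorithm $\cA$ exists for some pair $(c_1,c_2)$ with $\delta := c_1 - c_2 > 0$. Given an arbitrary instance $(E_0, \cm_0)$ of $\ell$-dimensional matching with parameter $k$, i.e., asking whether a feasible set of size $k$ exists, I build the \textnormal{BM} instance $I = (E_0, \cm_0, c, p, B, k, \ell)$ with $c(e) = 1$ and $p(e) = 1$ for every $e \in E_0$, and $B = k$. Then every feasible set of size at most $k$ is a solution of $I$, $\OPT(I)$ equals the maximum matching size capped at $k$, and $(E_0,\cm_0)$ is a YES-instance iff $\OPT(I) = k$.

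Next, I would set $\eps := \tfrac{1}{2k+1}$, so that $(1 - 2\eps)\cdot \OPT(I) > \OPT(I) - 1$. Because all profits are integral and no solution exceeds $\OPT(I)$, any solution $S$ with $p(S) \geq (1-2\eps)\OPT(I)$ must in fact satisfy $p(S) = \OPT(I)$, i.e.\ $S$ is an optimum of $I$. Consequently, the representative set $R$ returned by $\cA$ contains an optimum matching of $(E_0,\cm_0)$, and the $\ell$-dimensional matching instance induced by $R$ has the same maximum matching size as the original one -- the two instances are equivalent with respect to the parameterized decision problem.

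It remains to bound the size of this induced instance viewed as a kernel. By the assumed bound on $\cA$,
\[
|R| = O\!\left( f(\ell)\cdot k^{\ell - c_1}\cdot (2k+1)^{c_2}\right) = O\!\left(f(\ell)\cdot k^{\ell - \delta}\right),
\]
regardless of the sign of $c_2$. After relabelling the universe so that each coordinate set contains only values appearing in tuples of $R$, each of the $|R|$ tuples can be encoded in $O(\ell \log k)$ bits, giving a total kernel bit-size of $O\bigl(f(\ell)\cdot \ell^2 \cdot k^{\ell - \delta}\log k\bigr)$. For any fixed $\ell$, this is $O(k^{\ell - \delta/2})$ for all sufficiently large $k$, contradicting the known lower bound that, unless $\textnormal{coNP}\subseteq \textnormal{NP}/\textnormal{poly}$, $\ell$-dimensional matching parameterized by $k$ admits no polynomial kernel of bit size $O(k^{\ell - \varepsilon})$ for any $\varepsilon>0$.

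The main obstacle is calibrating the reduction so that the two sources of slack in the resulting kernel bound -- the factor $k^{c_2}$ coming from setting $\eps = \Theta(1/k)$, and the $O(\log k)$ overhead of re-encoding the universe -- together consume strictly less than the savings $k^{c_1}$ promised by the hypothetical representative-set algorithm. The hypothesis $c_1 > c_2$ is precisely what supplies this slack, and the boundary cases ($c_2 \leq 0$ or non-integer $c_1,c_2$) are handled by the same bookkeeping.
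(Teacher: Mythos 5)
Your proof follows essentially the same route as the paper: unit costs and profits, budget $B=k$, $\eps = \Theta(1/k)$ so that integrality forces the representative set to contain an optimal matching, and a size bound of $O\left(k^{\ell-c_1+c_2}\right)$ that contradicts the Dell--Marx kernelization lower bound. The one point needing care is the exact lower bound you invoke: the available Dell--Marx result is for \emph{Perfect} $\ell$-dimensional matching and bounds the \emph{number of tuples} rather than the bit size, so you should either keep the full universe (so the output remains a Perfect instance with the same parameter $k=n/\ell$, as the paper does --- note that your relabelling step would destroy the ``perfect'' property) and count tuples, or explicitly justify the transfer to the maximization version via a compression lower bound.
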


	In the proof of Lemma~\ref{lem:DM}, we use a lower bound on the kernel size of the  {\em Perfect $3$-Dimensional Matching} ($3$-PDM) problem, due to Dell and Marx \cite{dell2012kernelization,dell2018kernelization_arxiv}.\footnote{We refer the reader e.g., to~\cite{fomin2019kernelization}, for the formal definition of kernels.}  In our hardness result, we are able to efficiently construct a kernel for $3$-PDM using a representative set for BM, already for the special case of $3$-dimensional matching constraint, uniform costs, and uniform profits. 

%\ariel{We need to add some informal details about the proof (reduction to a results by Marx, etc)}
\subsection{Related Work}

While BM is studied here for the first time, special cases of the problem have been extensively studied from both parameterized and approximative points of view. For maximum weighted \lm without a budget constraint, Huang and Ward \cite{huang2023fpt} obtained a deterministic FPT algorithm, and algorithms for a more general problem, involving a {\em coverage function} objective rather than a linear objective. Their result differentiates the \lm problem from the matroid $\ell$-parity problem which cannot have an FPT algorithm in general matroids \cite{lovasz1980matroid,jensen1982complexity}. Interestingly, when the matroids are given a linear representation, the matroid $\ell$-parity problem admits a randomized FPT algorithm \cite{marx2009parameterized,fomin2016efficient} and a deterministic FPT algorithm \cite{lokshtanov2018deterministic}. We use a construction of \cite{huang2023fpt} as a building block of our algorithm. 

The $\ell$-dimensional $k$-matching problem (i.e., the version of the problem with no budget parametrized by $k$ and $\ell$) has received considerable attention in previous studies. Goyal et al. \cite{goyal2015deterministic} presented a deterministic FPT algorithm whose running time is ${O}^*(2.851^{(\ell-1) \cdot k})$ for the weighted version of $\ell$-dimensional $k$-matching, where ${O}^*$ is used to suppress polynomial factor in the running time. This result improves a previous result of \cite{chen2011improved}. For the unweighted version of $\ell$-dimensional $k$-matching, the state of the art is a randomized FPT algorithm with running time ${O}^*(2^{(\ell-2) \cdot k})$ \cite{bjorklund2017narrow}, improving a previous result for the problem \cite{koutis2009limits}. 

%\ariel{the timeline of the results does not make sense \cite{koutis2009limits} is from 2009 while chen \cite{chen2013faster} is from 2013 and bjorklund is from \cite{bjorklund2017narrow}}. 

Budgeted problems are well studied in approximation algorithms. As BM is a generalization of classic $0/1$-knapsack, it is known to be NP-hard. However, while knapsack admits a {\em fully PTAS (FPTAS)} \cite{martello1990knapsack}, BM is unlikely to admit a PTAS, even for the special case of $3$-dimensional matching with no budget constraint \cite{kann1991maximum}. Consequently, there has been extensive research work to identify special cases of BM which admit approximation schemes. 

For the budgeted matroid independent set (i.e., the special case of BM where the \lm consists of a single matroid), Doron-Arad et al.~\cite{DKS23} developed an {\em efficient PTAS (EPTAS)} using the representative set based technique. This algorithm was later generalized in~\cite{doron2023eptas} to tackle budgeted matroid intersection and budgeted matching (both are special cases of BM where $\ell = 2$), improving upon a result of Berger et al.~\cite{BBGS11}. We generalize some of the technical ideas of  \cite{DKS23,doron2023eptas} to the setting of $\ell$-matchoid and parametrized approximations.

\noindent
{\bf Organization of the paper:} Section~\ref{sec:RepSet} describes our construction of a representative set. In Section~\ref{sec:FPAS} we present our FPAS for BM.  Section~\ref{sec:hard} contains the proofs of the hardness results given in Lemma~\ref{lem:hard} and in Lemma~\ref{lem:DM}. In Section~\ref{sec:proofs} we present an auxiliary approximation algorithm for BM. %contains some differed proofs. 
We conclude in Section~\ref{sec:discussion} with a summary and some directions for future work.

\section{Representative Set}
\label{sec:RepSet}

In this section we construct a representative set for BM. 
Our first step is to round the profits of a given instance, and to determine the low profit elements that can be discarded without incurring significant loss of profit. 
We find a small cardinality representative set from which an almost optimal solution can be selected via enumeration yielding an FPAS (see Section~\ref{sec:FPAS}). 

We proceed to construct a representative set whose cardinality depends only on $\eps^{-1},k$, and $\ell$. This requires the definition of {\em profit classes}, namely, a partition of the elements into groups, where the elements in each group have 
similar profits. Constructing a representative set using this method requires an approximation of the optimum value of the input BM instance $I$. To this end, we use a $\frac{1}{2 \ell}$-approximation $\alpha = \textnormal{\textsf{ApproxBM}}(I)$ of the optimum value $\OPT(I)$ described below. %in Lemma~\ref{lem:CA}. 
\begin{lemma}
\label{lem:CA}
Given a \textnormal{BM} instance $I = (E, \cm, c,p, B,k,\ell)$,
there is an algorithm $\textnormal{\textsf{ApproxBM}}$ which 
 returns in time $|I|^{O(1)}$ a value $\alpha$ such that $\frac{\OPT(I)}{2\ell}\leq \alpha \leq \OPT(I) $. 
\end{lemma}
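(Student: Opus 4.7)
The plan is to design a simple polynomial-time greedy algorithm that compares two candidate solutions, and to analyze its approximation ratio by a Jenkyns-style charging argument adapted to the budget. First I would preprocess by removing every $e \in E$ with $c(e)>B$ or $\{e\}\notin \cI(\cm)$, since such elements cannot lie in any solution; let $E'$ denote the surviving elements and set $e^\star := \argmax_{e\in E'} p(e)$. Sort $E'$ in non-increasing order of profit and run the natural profit-greedy: starting from $S = \emptyset$, for each element $e$ in the sorted order, add $e$ to $S$ whenever $S \cup \{e\}$ is simultaneously matchoid-feasible, has cardinality at most $k$, and has cost at most $B$. Return $\alpha := \max\{p(S), p(e^\star)\}$.

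The upper bound $\alpha \leq \OPT(I)$ is immediate, because both $S$ and $\{e^\star\}$ are feasible solutions of $I$. For the lower bound, fix an optimal solution $\OPT^\star$ and look at the snapshot $S^o \subseteq S$ of the greedy set right before it processed each $o \in \OPT^\star \setminus S$. Such an $o$ must have been rejected because of at least one of: (M) matchoid-infeasibility of $S^o\cup\{o\}$, (K) the cardinality bound $|S^o|=k$, or (B) the budget $c(S^o)+c(o)>B$. Partition $\OPT^\star \setminus S$ into $\OPT^\star_M,\OPT^\star_K,\OPT^\star_B$ accordingly, assigning each rejected $o$ to a single class that caused its rejection.

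For $\OPT^\star_M$ I would invoke the classical $\ell$-matchoid circuit argument: the matroid $M_i$ with $o\in E_i$ whose feasibility $o$ violates contains a circuit with $o$ and at least one $g\in S^o$ with $p(g)\geq p(o)$; double-counting over the at most $\ell$ matroids that contain each element gives $\sum_{o \in \OPT^\star_M} p(o) \leq \ell \cdot p(S)$. For $\OPT^\star_K$, each rejection implies $|S|=k$ and $p(g)\geq p(o)$ for every $g\in S$, yielding $p(\OPT^\star_K)\leq p(S)$ via $|\OPT^\star_K|\leq k$. For $\OPT^\star_B$, I would split by cost: at most one element has $c(o)>B/2$ (since $c(\OPT^\star)\leq B$) and its profit is at most $p(e^\star)$, while each remaining $o$ has $c(o)\leq B/2$, in which case $c(S^o)>B/2$ and a profit-descending pairing against elements of $S$ bounds its contribution by a constant times $p(S)$. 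Summing the three bounds yields $p(\OPT^\star) \leq O(\ell)\cdot p(S) + O(1)\cdot p(e^\star)$, from which $\max\{p(S),p(e^\star)\}\geq \OPT(I)/(2\ell)$ follows (with a small tightening for very small $\ell$).

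The main obstacle is that the budget constraint is not induced by a matroid, so the Jenkyns-style circuit argument does not apply to it directly. The factor $2$ in the bound $1/(2\ell)$ reflects precisely the loss incurred by splitting the budget-rejected optimal elements between the two candidate solutions; the most delicate point is bounding the profit of cheap budget-rejections, where one must exploit the fact that $c(S)$ has already grown past $B/2$ to ensure their profits can be charged against the greedy elements rather than blowing up with $k$.
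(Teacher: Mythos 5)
Your overall strategy (profit-ordered greedy plus best singleton, analyzed by partitioning the rejected optimal elements according to which constraint blocked them) is genuinely different from the paper's, which instead takes a known $\frac{1}{\ell}$-approximation for the \emph{unbudgeted} $\ell$-extendible-system problem \cite{jenkyns1976efficacy,jukna2011extremal} and converts it into a $\bigl(\frac{1/\ell}{1/\ell+1}-\eps\bigr)$-approximation for the budgeted problem via the Lagrangian-relaxation technique of \cite{kulik2021lagrangian}. Unfortunately, your version has a genuine gap in the treatment of $\OPT^\star_B$. The claim that the cheap budget-rejected elements can be charged to $S$ with only a constant-factor loss is false. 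Consider $\ell=1$ with a free matroid on $k+1$ elements, where $e_0$ has profit $p_0$ and cost $B$, and $e_1,\dots,e_k$ each have profit $p_0-\delta$ and cost $B/k$. Profit-ordered greedy picks $e_0$ first, exhausts the budget, and rejects every $e_i$ on budget grounds; the best singleton is also $e_0$. Your algorithm returns $\alpha=p_0$ while $\OPT(I)\approx k\cdot p_0$, so the ratio is $1/k$, far below $\frac{1}{2\ell}=\frac12$ once $k\geq 3$. The failure is exactly at the point you flagged as delicate: each cheap budget-rejected $o$ satisfies $p(o)\leq p(S)$ \emph{individually}, but nothing bounds their number by a function of $\ell$, so their total profit can be $k\cdot p(S)$ rather than $O(1)\cdot p(S)$. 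Knowing that $c(S^o)>B/2$ does not help, because a single expensive greedy element can push the cost past $B/2$ and thereby block arbitrarily many cheap optimal elements.

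The underlying tension is that the Jenkyns-style charging needs the greedy to be ordered by profit, while the budget constraint needs an ordering sensitive to cost (density). The paper resolves this by running the unbudgeted $\frac{1}{\ell}$-approximate greedy on Lagrangian-shifted profits $p(e)-\lambda c(e)$ and searching over $\lambda$, which is what the black-box reduction of \cite{kulik2021lagrangian} implements; the resulting ratio $\frac{1}{\ell+1}-\eps$ is then relaxed to $\frac{1}{2\ell}$ by a suitable choice of $\eps$. Your preprocessing, the upper bound $\alpha\leq\OPT(I)$, and the bounds for $\OPT^\star_M$ and $\OPT^\star_K$ are all fine; only the budget case requires this heavier machinery.
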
 The proof 
of Lemma~\ref{lem:CA} is given in Section~\ref{sec:proofs}. The proof utilizes a known approximation algorithm for the unbudgeted version of BM~\cite{jenkyns1976efficacy,jukna2011extremal} which is then transformed into an approximation algorithm for BM using a technique of~\cite{kulik2021lagrangian}.  
 
 The first step in designing the profit classes is to determine a set of {\em profitable} elements. 
 required for obtaining an almost optimal solution.
 This set allows us to 	

construct only a small number of profit classes. We define the set of
	{\em profitable} elements w.r.t. $I, \alpha$, and $\eps$ as 
\begin{equation}
	\label{eq:alphaP}
	H[I,\alpha,\eps] = \left\{ e \in E~|~ 
	 \frac{\eps \cdot \alpha}{k} < p(e) \leq 2\cdot \ell \cdot \alpha    \right\}. 
\end{equation} 
When clear from the context, we simply use $H = H[I,\alpha,\eps]$. 
Consider the non-profitable elements. The next lemma states that omitting these elements indeed has small effect on the profit of the solution set.

%\ariel{There is a technical issue- the instance has been fixed in the beginning of the section, and  it is quantified once more in the lemma}
	\begin{lemma}
	\label{lem:A1}
	For every \textnormal{BM} instance $I = (E, \cC, c,p, B,k,\ell)$, $\frac{\OPT(I)}{2\ell} \leq \alpha \leq \OPT(I) $, $0<\eps<\frac{1}{2}$, and $S \in \cm_{k}$ it holds that $p \left(  S \setminus H[I,\alpha,\eps] \right) \leq  \eps \cdot \OPT(I)$.  
\end{lemma}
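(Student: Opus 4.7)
The plan is to split $S\setminus H$ into two parts according to which bound of the definition of $H$ is violated, namely
\[
L=\{e\in S\mid p(e)\leq \eps\alpha/k\}\qquad\text{and}\qquad U=\{e\in S\mid p(e)>2\ell\alpha\},
\]
and bound the profit contribution of each part separately. Before doing so, I would note the standard preprocessing assumption that $c(e)\leq B$ for every $e\in E$: any element with $c(e)>B$ cannot appear in any solution and can be discarded without affecting $\OPT(I)$ or $H$. (Alternatively, one could observe that in the places where the lemma is invoked, $S$ is a solution, so $c(e)\le c(S)\le B$ for every $e\in S$.)

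For the low-profit part $L$, I would use only the cardinality bound built into $\cm_k$: since $|S|\leq k$ and every element of $L$ has $p(e)\leq \eps\alpha/k$,
\[
p(L)\;\leq\;|L|\cdot\frac{\eps\alpha}{k}\;\leq\;k\cdot\frac{\eps\alpha}{k}\;=\;\eps\alpha\;\leq\;\eps\cdot\OPT(I),
\]
where the last inequality uses the assumption $\alpha\leq \OPT(I)$.

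For the high-profit part $U$, I would argue it is empty. Suppose, for contradiction, there exists $e\in U$. By the hereditary property applied in each matroid of $\cm$ (using $S\cap E_i\in\cI_i$ for every constituent matroid $(E_i,\cI_i)$), we have $\{e\}\in\cI(\cm)$, and clearly $|\{e\}|\leq k$, so $\{e\}\in\cm_k$. Together with $c(e)\leq B$, this makes $\{e\}$ a feasible solution of $I$, so
\[
p(e)\;\leq\;\OPT(I)\;\leq\;2\ell\cdot\alpha,
\]
where the second inequality comes from $\alpha\geq \OPT(I)/(2\ell)$. This contradicts $p(e)>2\ell\alpha$, so $U=\emptyset$.

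Combining the two parts gives $p(S\setminus H)=p(L)+p(U)\leq \eps\cdot\OPT(I)$, as required. I don't expect any real obstacle here; the only subtle point is the high-profit case, which is resolved cleanly by invoking the hereditary property to promote $\{e\}$ to a singleton solution and then using the guaranteed $2\ell$-approximation quality of $\alpha$.
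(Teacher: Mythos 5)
Your proof is correct and, for the part that does the real work, identical to the paper's: the paper bounds $p(S\setminus H)$ by $k\cdot\frac{\eps\cdot\alpha}{k}=\eps\cdot\alpha\le\eps\cdot\OPT(I)$, which is exactly your bound on $L$. The only difference is that the paper simply asserts that every element of $S\setminus H$ has profit at most $\frac{\eps\cdot\alpha}{k}$, silently ignoring elements that fail membership in $H$ by exceeding the \emph{upper} threshold $2\ell\alpha$; your explicit argument that $U=\emptyset$ (promote $e$ to a singleton in $\cm_k$ via the hereditary property, then use $p(e)\le\OPT(I)\le 2\ell\cdot\alpha$) supplies the justification the paper leaves implicit, and you are right that it needs either the preprocessing assumption $c(e)\le B$ or the observation that the lemma is only ever invoked with $S$ an (optimal) solution. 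So your write-up follows the same route but is, if anything, more complete than the paper's one-line proof.
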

\begin{proof}
	We note that
	$$p \left(  S \setminus H[I,\alpha(I),\eps] \right) \leq k \cdot \frac{\eps \cdot \alpha}{k} = \eps \cdot \alpha \leq  \eps \cdot \OPT(I).$$
The first inequality holds since each element in $S \setminus H[I,\alpha(I),\eps]$ has profit at most $\frac{\eps \cdot \alpha}{k}$ by \eqref{eq:alphaP}; in addition, since $S \in \cm_{k}$ it follows that $S$ contains at most $k$ elements. The second inequality  holds as $\alpha \leq \OPT(I)$. %Observation~\ref{ob:A}
\end{proof}

Using Lemma~\ref{lem:A1}, our representative set can be constructed exclusively from profitable elements. %elements from $H[I,\alpha(I),\eps]$ only. %, where $\alpha$ is found using enumeration. 
We can now partition the profitable elements into a small number of {\em profit classes}. There is a profit class $r$ for a suitable range of profit values.  %, where elements from the same profit class have {\em similar} profits. 
%The profit classes are defined w.r.t. to the approximation guarantee of $\alpha(I)$ for $\OPT(I)$. 
Specifically, let 
\begin{equation}
\label{eq:def_D}
D(I,\eps) = \left\{r \in \mathbb{N}_{>0}~\big|~ (1-\eps)^{r-1} \geq \frac{\eps}{2 \cdot \ell \cdot k}\right\},
\end{equation}
 and we simplify by $ D = D(I,\eps)$. %and $\alpha = \alpha(I)$. 
 For all $r \in D$, and $\frac{\OPT(I)}{2\ell} \leq \alpha \leq \OPT(I) $, define the $r$-{\em profit class} as 
%\begin{equation}
%	\label{Er}
%	%	{\ck}_{r}(\alpha)
%	{\ck}_{r}(\alpha(I))
%	= \left\{e \in E~\big|~ p(e) \in \left(2^{r-1},2^r\right]\right\}.
%\end{equation} 
\begin{equation}
	\label{Er}
	%	{\ck}_{r}(\alpha)
	{\ck}_{r}(\alpha)
	= \left\{e \in E~\bigg|~ \frac{ p(e)}{2 \cdot \ell \cdot \alpha} \in \big( (1-\eps)^{r}, (1-\eps)^{r-1} \big]\right\}.
\end{equation} 
In words, each profit class $r \in D$ contains profitable elements (and may contain some elements that are {\em almost} profitable due to our $\frac{1}{2\ell}$-approximation for $\OPT(I)$), where the profits of any two elements that belong to the $r$-profit class can differ by at most a multiplicative factor of $(1-\eps)$.  We use the following simple upper bound on the number of profit classes. 

%\ariel{The statement is independent from $\alpha$. Should we remove it?}
	\begin{lemma}
	\label{lem:ProfitBound}
		For every \textnormal{BM} instance $I$ and $0<\eps<\frac{1}{2}$ there are $O( k \cdot \ell \cdot \eps^{-2})$ profit classes.  
		%That is, $D(I,\eps)\leq O( k \cdot \ell \cdot \eps^{-2})$.
\end{lemma}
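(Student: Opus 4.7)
The plan is to directly bound the cardinality of the set $D = D(I,\eps)$ defined in \eqref{eq:def_D}, since by \eqref{Er} each profit class corresponds to a single index $r \in D$, so the number of profit classes is exactly $|D|$.

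First, I would let $r^{*} = \max D$ (which exists since $D \subseteq \mathbb{N}_{>0}$ and is bounded above because $(1-\eps)^{r-1} \to 0$). From the definition of $D$, $r^{*}$ satisfies $(1-\eps)^{r^{*}-1} \geq \frac{\eps}{2\ell k}$. Taking natural logarithms of both sides and rearranging gives
\[
r^{*} - 1 \;\leq\; \frac{\ln(2\ell k/\eps)}{\ln\!\left(\frac{1}{1-\eps}\right)}.
\]
Since $D \subseteq \{1,2,\ldots,r^{*}\}$, bounding $r^{*}$ bounds $|D|$.

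The main (and only) technical step is to bound the right-hand side by $O(k \ell \eps^{-2})$. I would combine two standard inequalities: $\ln\!\bigl(\frac{1}{1-\eps}\bigr) = -\ln(1-\eps) \geq \eps$, valid for $\eps \in (0,1)$, together with the crude bound $\ln(x) \leq x$ applied to $x = 2\ell k/\eps$. Substituting these yields
\[
r^{*} - 1 \;\leq\; \frac{2\ell k /\eps}{\eps} \;=\; \frac{2\, k \, \ell}{\eps^{2}},
\]
so $|D| = r^{*} = O(k \cdot \ell \cdot \eps^{-2})$, as claimed.

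There is no real obstacle here; this is a purely computational lemma, so the plan is essentially just to carry out the logarithm estimate carefully. The only minor point to be mindful of is that the hypothesis $\eps < 1/2$ guarantees $\ln(1/(1-\eps)) > 0$, so the division above is well-defined. Also, the bound obtained is actually stronger than needed (one could instead keep $\ln(2\ell k/\eps)$ and obtain $O(\eps^{-1}\log(k\ell/\eps))$), but since the statement asks only for $O(k \ell \eps^{-2})$, the crude estimate suffices and keeps the proof short.
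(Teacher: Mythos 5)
Your proposal is correct and follows essentially the same route as the paper: both bound $|D|$ by $\log_{1-\eps}\!\left(\frac{\eps}{2\ell k}\right)+1$ and then apply the two elementary inequalities $-\ln(1-\eps)\geq \eps$ and $\ln(y)\leq y$ to obtain the $O(k\cdot\ell\cdot\eps^{-2})$ bound. The only cosmetic difference is that you phrase the counting step via $r^{*}=\max D$ rather than directly writing $|D|\leq \log_{1-\eps}(\cdot)+1$, which amounts to the same thing.
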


\begin{proof}
We note that
		\begin{equation}
			\label{eq:ing}
			\log_{1-\eps} \left(\frac{\eps}{2 \ell \cdot k}\right)  \leq 
			\frac{\ln \left(\frac{2 \ell \cdot k}{\eps}\right)}{-\ln \left(1-\eps \right)} \leq \frac{ 2 \ell \cdot k \cdot \eps^{-1}}{\eps}.
			% = O( k \cdot \ell \cdot \eps^{-2}). 
		\end{equation} 
	The second inequality follows from  $x< -\ln (1-x), \forall x>-1, x \neq 0$, and $\ln (y) < y, \forall y>0$. 
By \eqref{eq:def_D}
%\eqref{Er} 
the number of profit classes is bounded by
\begin{equation}
\label{eq:upper_bound_D}
|D| \leq \log_{1-\eps} \left( \frac{\eps}{ 2 \ell \cdot k}  \right)+1 = O( k \cdot \ell \cdot \eps^{-2}).
\end{equation}
The last inequality follows from \eqref{eq:ing}. 
\end{proof}
Next, we define an {\em exchange set} for each profit class.  This facilitates the construction of a representative set.   
Intuitively, a subset of elements $X$ forms an exchange set for a profit class ${\ck}_r(\alpha)$ if any feasible set $\Delta$ and element $a \in (\Delta \cap {\ck}_r(\alpha)) \setminus X$ can be replaced (while maintaining feasibility) by some element $b \in (X \cap {\ck}_r(\alpha)) \setminus \Delta$ such that the cost of $b$ is upper bounded by the cost of $a$. Formally, 
\begin{definition}
	\label{def:r-set}
	Let $I = (E, \cm, c,p, B,k,\ell)$ be a \textnormal{BM} instance, $0<\eps<\frac{1}{2}$, $\frac{\OPT(I)}{2\ell} \leq \alpha \leq \OPT(I) $, $r \in D (I,\eps)$, and $X \subseteq {\ck}_r(\alpha)$. We say that $X$ is an {\em exchange set} for $I,\eps,\alpha,$ and $r$ if: 
	\begin{itemize}
		\item For all $\Delta \in \cm_{k}$ and $a \in (\Delta \cap {\ck}_r(\alpha)) \setminus X$ there is $b \in ({\ck}_r(\alpha) \cap X) \setminus \Delta$ satisfying 
		\begin{itemize}
			\item $c(b) \leq c(a)$.
			\item $\Delta-a+b \in \cm_{k}$.
		\end{itemize} 
	\end{itemize}

\end{definition}

%The similarity between SRS and exchange sets is not coincidental. 
The key argument in this section is that if a set $R \subseteq E$ satisfies that $R \cap {\ck}_r(\alpha)$ is an exchange set for any $r \in D$, then $R$ is a representative set. This allows us to construct a representative set using a union of disjoint exchange sets, one for each profit class. We give an illustration in Figure~\ref{fig:2}.

%\begin{tikzpicture}[ultra thick]
%	\node[draw, rectangle, minimum width=5cm, minimum height=4cm] (R) at (0,0) {};
%	
%	\node[draw=none] at (4.8, 0) {Representative Set $\bf R$};
%	
%		\node[draw=none] at (-4, -0.5) {Exchange Set}; 
%		
%		\node[draw=none] at (0, 1.5) {$\bf {R \cap  \textcolor{=green}{\mathcal{K}_1(\alpha)}}$};
%				\node[draw=none] at (0, 0.5) {$\bf R \cap  \textcolor{red}{\mathcal{K}_2(\alpha)}$};
%								\node[draw=none] at (0, -0.5) {$\bf R \cap \textcolor{blue}{\mathcal{K}_3(\alpha)}$};
%																\node[draw=none] at (0, -1.5) {$\bf R \cap  \textcolor{brown}{\mathcal{K}_4(\alpha)}$};
%
%	\draw ([yshift=-1cm]R.north west) -- ([yshift=-1cm]R.north east) 
%	;
%	\draw ([yshift=-2cm]R.north west) -- ([yshift=-2cm]R.north east) 
%	;
%	\draw ([yshift=+1cm]R.south west) -- ([yshift=+1cm]R.south east) 
%	;
%\draw [decorate,
%decoration = {calligraphic brace}] (-2.7,-1) --  (-2.7,0);
%
%\draw [decorate,
%decoration = {calligraphic brace}] (2.8,2) --  (2.8,-2);
%% \draw [decorate, decoration={brace, amplitude=10pt, mirror}, xshift=-6pt, yshift=0pt];
%%([xshift=-6pt]R.south east) -- ([xshift=-6pt]R.north east);
%\end{tikzpicture}

 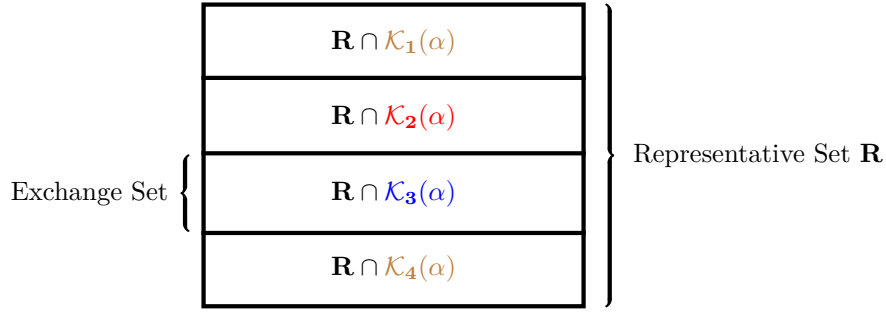
\begin{figure} %\hspace*{6cm}
%	\label{fig:2}
\centering
\begin{tikzpicture}[ultra thick]
	\node[draw, rectangle, minimum width=5cm, minimum height=4cm] (R) at (0,0) {};
	
	\node[draw=none] at (4.8, 0) {Representative Set $\bf R$};
	
	\node[draw=none] at (-4, -0.5) {Exchange Set}; 
	
	\node[draw=none] at (0, 1.5) {$\bf {R \cap  \textcolor{brown}{\mathcal{K}_1(\alpha)}}$};
	\node[draw=none] at (0, 0.5) {$\bf R \cap  \textcolor{red}{\mathcal{K}_2(\alpha)}$};
	\node[draw=none] at (0, -0.5) {$\bf R \cap \textcolor{blue}{\mathcal{K}_3(\alpha)}$};
	\node[draw=none] at (0, -1.5) {$\bf R \cap  \textcolor{brown}{\mathcal{K}_4(\alpha)}$};
	
	\draw ([yshift=-1cm]R.north west) -- ([yshift=-1cm]R.north east) 
	;
	\draw ([yshift=-2cm]R.north west) -- ([yshift=-2cm]R.north east) 
	;
	\draw ([yshift=+1cm]R.south west) -- ([yshift=+1cm]R.south east) 
	;
	\draw [decorate,
	decoration = {calligraphic brace}] (-2.7,-1) --  (-2.7,0);
	
	\draw [decorate,
	decoration = {calligraphic brace}] (2.8,2) --  (2.8,-2);
	% \draw [decorate, decoration={brace, amplitude=10pt, mirror}, xshift=-6pt, yshift=0pt];
	%([xshift=-6pt]R.south east) -- ([xshift=-6pt]R.north east);
\end{tikzpicture}
	\caption{\label{fig:2} An illustration of our construction of a representative set $R$, using a union of exchange sets, one for each profit class $\mathcal{K}_1(\alpha)$, $\mathcal{K}_2(\alpha)$, $\mathcal{K}_3(\alpha)$, $\mathcal{K}_4(\alpha)$. }
\end{figure}

%	\begin{tikzpicture}[scale=1.4, every node/.style={draw, circle, inner sep=1pt}]
%	\node (11) at (5.5,4) {$\bf \textcolor{=green}{1}$};
%	\node (22) at (7,4) {$\bf \textcolor{black}{2}$};
%	\node (33) at (5.5,3) {$\bf \textcolor{red}{3}$};
%	\node (44) at (7,3) {$\bf \textcolor{blue}{4}$};
%	\draw (11) -- (22);
%	\draw (11) -- (33);
%	\draw (11) -- (44);
%	\draw (22) -- (44);
%	\draw (33) -- (44);
%	
%	\node[draw=none] at (9, 3.5) {$~\Pi = \{\textnormal{independent sets of } G\}$};
%	
%	\node[draw=none] at (4.5, 3.5) {A graph $G$};
%	% Formula
%	\node[draw=none] at (6.75, 2.5) {$\mathcal{J}_{4,2} = \{\emptyset, \{$\bf \textcolor{=green}{1}$\},\{$\bf \textcolor{black}{2}$\},\{$\bf \textcolor{red}{3}$\},\{$\bf \textcolor{blue}{4}$\}\}, ~~\mathcal{K}_{4,2,5} = \{\{$\bf \textcolor{black}{2}$,$\bf \textcolor{blue}{4}$\},\{$\bf \textcolor{black}{2}$,$\bf \textcolor{=green}{1}$\},\{$\bf \textcolor{blue}{4}$,$\bf \textcolor{red}{3}$\},\{$\bf \textcolor{red}{3}$,$\bf \textcolor{=green}{1}$\}\}, ~~\mathcal{L}_{4,2,5}(\Pi) = \{\{$\bf \textcolor{black}{2}$,$\bf \textcolor{red}{3}$\}\}$};
%\end{tikzpicture}

\begin{lemma}
	\label{lem:sufficientRep}
	Let $I = (E, \cm, c,p, B,k,\ell)$ be a \textnormal{BM} instance, $0<\eps<\frac{1}{2}$, $\frac{\OPT(I)}{2\ell} \leq \alpha \leq \OPT(I) $, and $R \subseteq E$. If for all $r \in D = D(I,\eps)$ it holds that $R \cap {\ck}_r(\alpha)$ is an  exchange set for $I,\eps,\alpha,$ and $r$, then $R$ is a representative set of $I$ and $\eps$. 
\end{lemma}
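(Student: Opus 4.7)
\medskip
\noindent\textbf{Proof proposal.}
The plan is to take an optimal solution $S^*$ of $I$, discard its non-profitable elements, and then iteratively swap out any remaining element that lies outside $R$ by using the exchange property of $R\cap\mathcal{K}_r(\alpha)$ inside the appropriate profit class. Concretely, set $S_0 = S^*\cap H[I,\alpha,\eps]$. Since $\mathcal{M}_k$ is hereditary (both the matchoid independence and the $|\cdot|\leq k$ bound are hereditary), we have $S_0\in\mathcal{M}_k$, and $c(S_0)\le c(S^*)\le B$. Lemma~\ref{lem:A1} gives $p(S_0)\ge (1-\eps)\OPT(I)$. Moreover, every $e\in S_0$ is profitable, so it lies in some class $\mathcal{K}_r(\alpha)$ with $r\in D$.

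Next I would define a sequence $S_0,S_1,\ldots,S_m$ by the following swap procedure. While there exists $a\in S_i\setminus R$, choose the unique $r\in D$ with $a\in\mathcal{K}_r(\alpha)$; apply Definition~\ref{def:r-set} to the exchange set $R\cap\mathcal{K}_r(\alpha)$, the feasible set $\Delta = S_i\in\mathcal{M}_k$, and the element $a$, obtaining $b\in (R\cap\mathcal{K}_r(\alpha))\setminus S_i$ with $c(b)\le c(a)$ and $S_i-a+b\in\mathcal{M}_k$; set $S_{i+1}=S_i-a+b$. Each step strictly decreases $|S_i\setminus R|$ because $a\notin R$ is removed and $b\in R$ is added, so the process terminates with some $S=S_m\subseteq R$ that belongs to $\mathcal{M}_k$ and satisfies $c(S)\le c(S_0)\le B$ by telescoping the inequalities $c(b_i)\le c(a_i)$.

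It remains to bound $p(S)$. Writing the swaps as $a_i\mapsto b_i$ for $i=1,\ldots,m$, I claim the $a_i$'s are distinct elements of $S_0$. Indeed $a_i\in S_{i-1}\setminus R = (S_0 \setminus\{a_1,\ldots,a_{i-1}\})\cup\{b_1,\ldots,b_{i-1}\}\setminus R$, and since every $b_j\in R$ while $a_i\notin R$, we get $a_i\in S_0\setminus\{a_1,\ldots,a_{i-1}\}$. For each swap, $a_i$ and $b_i$ lie in the same profit class, so by \eqref{Er} the profits satisfy $p(b_i)>(1-\eps)p(a_i)$. Summing,
\[
p(S) = p(S_0) - \sum_{i=1}^{m} p(a_i) + \sum_{i=1}^{m} p(b_i) \;\ge\; p(S_0) - \eps\sum_{i=1}^{m} p(a_i) \;\ge\; (1-\eps)\,p(S_0),
\]
where the last inequality uses $\sum_i p(a_i)\le p(S_0)$ (distinct elements of $S_0$). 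Combining with $p(S_0)\ge (1-\eps)\OPT(I)$ yields $p(S)\ge (1-\eps)^2\OPT(I)\ge (1-2\eps)\OPT(I)$, showing that $S$ witnesses $R$ as a representative set.

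The main obstacle I anticipate is the bookkeeping of the iterative swap argument: ensuring the process terminates, verifying that the $a_i$'s are distinct (so that $\sum p(a_i)\le p(S_0)$), and checking that every intermediate $S_i$ still lies in $\mathcal{M}_k$ so that Definition~\ref{def:r-set} can be reapplied. Once these invariants are in place, the profit and cost bounds follow by a direct telescoping argument using the per-class factor $(1-\eps)$.
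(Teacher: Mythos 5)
Your proposal is correct and follows essentially the same route as the paper: the paper phrases your iterative swap extremally (it picks a "substitution" of the optimal solution maximizing its intersection with $R$ and derives a contradiction from any element outside $R$, using the same per-class exchange-set property), and its profit accounting via the class cardinalities $|\mathcal{K}_r(\alpha)\cap Z_G|=|\mathcal{K}_r(\alpha)\cap G|$ yields the same $(1-\eps)$ per-class loss as your telescoping bound. Your termination, distinctness, and feasibility invariants are exactly the bookkeeping the paper's maximality argument packages away, so the two proofs coincide in substance.
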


%{\bf Proof of Lemma~\ref{lem:sufficientRep}:}

For the proof of Lemma~\ref{lem:sufficientRep}, we define a {\em substitution} of some feasible set $G \in \cm_{k}$. We will use $G$ later only as an optimal solution; however, we can state the following claims for a general $G \in \cm_{k}$.  %adapting the definition in \cite{DKS23,doron2023eptas}. 
%The first two properties of substitution of some $G \in \cm_{k}$ are identical to the first two properties in the definition of a replacement of $G$. 
We require that a substitution preserves the number of profitable elements in $G$ from each profit class, so a substitution guarantees a profit similar to the profit of $G$.
%an almost similar profit w.r.t. $G$. 
\begin{definition}
	\label{def:sub}
	For $G \in \cm_{k}$ and $Z_G \subseteq \bigcup_{r \in D} {\ck}_r(\alpha)$, we say that $Z_G$ is a {\em substitution} of $G$ if the following holds. \begin{enumerate}

		\item $Z_G \in \cm_{k}$.\label{pp:1}
		
		\item  $c(Z_G) \leq c(G)$.\label{pp:2}
		
		\item For all $r \in D$ it holds that $|{\ck}_r(\alpha) \cap Z_G| = |{\ck}_r(\alpha) \cap G|$.\label{pp:3}

		%	\item $(G \setminus H) \cap Z_G = \emptyset$.\label{pp:4}
		%
	\end{enumerate}
\end{definition}

%\begin{proof}
	
\noindent {\bf Proof of Lemma~\ref{lem:sufficientRep}:}
%\ariel{ The old text was: ``We first show that we can find a substitution of any $\cm_k$ which is a subset of $R$ using that $R \cap {\ck}_r(\alpha)$ is an exchange set for all $r \in D$.'', I could not understand it. What about ``We first show that every set $G\in \cm_k$ has a substitution which is a subset of $R$.''?}
We first show that every set $G\in \cm_k$ has a substitution which is a subset of $R$. 
\begin{claim}
	\label{claim:substitution}
	For any $G \in \cm_{k}$ there is a substitution $Z_G$ of $G$ such that $Z_G \subseteq R$. 
\end{claim}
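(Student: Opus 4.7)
The plan is to build $Z_G$ by a greedy exchange procedure. First I would set $Z_G := G \cap \bigcup_{r \in D} \ck_r(\alpha)$, that is, drop from $G$ every element that lies outside all profit classes. Since $G \in \cm_{k}$ and matroid feasibility is hereditary, this initial $Z_G$ lies in $\cm_{k}$, satisfies $c(Z_G)\leq c(G)$, and for every $r \in D$ has $|Z_G \cap \ck_r(\alpha)| = |G\cap \ck_r(\alpha)|$, so all three conditions of Definition~\ref{def:sub} hold; only the inclusion $Z_G \subseteq R$ may fail.

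Next I would repeat the following swap step while there exist $r \in D$ and $a \in (Z_G \cap \ck_r(\alpha)) \setminus R$: apply the exchange-set property of $R \cap \ck_r(\alpha)$ (Definition~\ref{def:r-set}) with $\Delta := Z_G$ and the element $a$, obtaining $b \in (\ck_r(\alpha) \cap R) \setminus Z_G$ with $c(b)\leq c(a)$ and $Z_G - a + b \in \cm_{k}$, and then update $Z_G := Z_G - a + b$. I would argue that each swap preserves the substitution invariants: feasibility in $\cm_{k}$ comes directly from the exchange conclusion; cost only decreases since $c(b)\leq c(a)$; and the class-count condition survives because $a,b$ lie in the same class $\ck_r(\alpha)$ while no other class is touched and $Z_G$ remains a subset of $\bigcup_{r\in D}\ck_r(\alpha)$.

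For termination, I would observe that each iteration removes an element outside $R$ (namely $a$) and inserts one inside $R$ (namely $b \notin Z_G$), so the nonnegative integer $|Z_G \setminus R|$ strictly decreases; the process therefore halts after at most $|G|\leq k$ iterations. At termination no $r \in D$ admits an element of $Z_G \cap \ck_r(\alpha)$ outside $R$, and because $Z_G \subseteq \bigcup_{r \in D} \ck_r(\alpha)$ throughout, we conclude $Z_G \subseteq R$, finishing the proof.

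The main obstacle, I expect, is bookkeeping the invariants across iterations, in particular that $Z_G$ is always in $\cm_{k}$ so that the hypothesis of Definition~\ref{def:r-set} remains available for the next swap, and that the inserted element $b$ is genuinely ``new'' so that $|Z_G \setminus R|$ really decreases. Both are ensured by the exchange-set definition itself (the conclusion $Z_G - a + b \in \cm_{k}$, and the requirement $b \in (\ck_r(\alpha) \cap R)\setminus \Delta$), so there is no substantive difficulty beyond careful tracking.
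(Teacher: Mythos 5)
Your proof is correct and takes essentially the same route as the paper: both start from $G \cap \bigcup_{r\in D}\ck_r(\alpha)$ and use the exchange-set property to trade each element of $Z_G\setminus R$ for one of $R$ within the same profit class while preserving the three substitution conditions. The only difference is presentational — you run an explicit swap loop with the potential $|Z_G\setminus R|$ as a termination measure, whereas the paper picks a substitution maximizing $|Z_G\cap R|$ and derives a contradiction from a single swap; the two arguments are interchangeable.
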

\begin{claimproof}
	Let $G \in \cm_{k}$ and let $Z_G$ be a substitution of $G$ such that $|Z_G \cap R|$ is maximal among all substitutions of $G$; formally, let $\mathcal{S}(G)$ be all substitutions of $G$ and let $$Z_G \in \{ Z \in \mathcal{S}(G)~|~ |Z \cap R| = \max_{Z' \in \mathcal{S}(G)} |Z' \cap R|\}.$$ Since $G \cap \bigcup_{r \in D} {\ck}_r(\alpha)$ %, where $H = H[I,\alpha,\eps]$, 
	is in particular a substitution of $G$ it follows that $\mathcal{S}(G)\neq \emptyset$; thus,  $Z_G$ is well defined.  Assume towards a contradiction that there is $a \in Z_G \setminus R$; then, by Definition~\ref{def:sub} there is $r \in D$ such that $a \in {\ck}_r(\alpha)$. % Let $\Delta_G = (G \setminus H) \cup Z_G$; by definition~\ref{def:sub} it holds that $\Delta_G \in \cm_{\leq k}$. In addition, 
	Because $R \cap {\ck}_r(\alpha)$ is an exchange set for $I,\eps,\alpha$, and $r$, by Definition~\ref{def:r-set} there is $b \in ({\ck}_r(\alpha) \cap R) \setminus Z_G$ such that $c(b) \leq c(a)$ and $Z_G -a+b \in \cm_{k}$. Then,  the properties of Definition~\ref{def:sub} are satisfied for $Z_G-a+b$ by the following. 
	\begin{enumerate}
		\item $Z_G -a +b \in \cm_{k}$ by the definition of $b$. 
		
		\item  $c(Z_G-a+b) \leq c(Z_G) \leq c(G)$ because $c(b) \leq c(a)$. 
		
		\item for all $r' \in D$ it holds that $|{\ck}_{r'}(\alpha) \cap (Z_G-a+b)| = |{\ck}_{r'}(\alpha) \cap Z_G| = |{\ck}_{r'}(\alpha) \cap G|$ because $a,b \in {\ck}_r(\alpha)$.
		
%		\item $|(G \setminus H) \cap (Z_G -a+b)| \leq |(G \setminus H) \cap (Z_G)| = 0$ where the inequality follows because $b \notin \Delta_G$ and the equality is since $Z_G$ is a substitution of $G$. 
	\end{enumerate}
	
	By the above, and using and Definition~\ref{def:sub}, we have that $Z_G+a-b$ is a substitution of $G$; that is, $Z_G+a-b \in \mathcal{S}(G)$. Moreover, \begin{equation}
		\label{eq:ZG}
		|R \cap (Z_G -a+b)|>|R \cap Z_G| = \max_{Z \in \mathcal{S}(G)} |Z \cap R|.
	\end{equation} The first inequality holds since $a \in Z_G \setminus R$ and $b \in R$. Thus, we have found a substitution of $G$ which contains more elements in $R$ than $Z_G \in \mathcal{S}(G)$. A contradiction to the definition of $Z_G$ as a substitution of $G$ having a maximum number of elements in $R$. Hence, 	
	%By \eqref{eq:ZG} we reach a contradiction since we found a substitution of $G$ with more elements in $R$ than $Z_G \in \mathcal{S}(G)$, which is defined as a substitution of $G$ with the maximum number of elements in $R$. Therefore, 
	$Z_G \subseteq R$, as required.
\end{claimproof}

Let $G$ be an optimal solution for $I$. We complete the proof of Lemma~\ref{lem:sufficientRep} by showing that a substitution of $G$ which is a subset of $R$ yields a  profit at least $(1-2\eps) \cdot \OPT(I)$. Let $H[I,\alpha,\eps] = H$ be the set of profitable elements w.r.t. $I, \alpha$ and $\eps$ (as defined in \eqref{eq:alphaP}). By Claim~\ref{claim:substitution}, as $G \in \cm_{k}$, it has a substitution $Z_G \subseteq R$. Then, 
\begin{equation}
	\label{eq:profitR}
	\begin{aligned}
		p(Z_G) \geq{} & \sum_{r \in D} p({\ck}_r(\alpha) \cap Z_G) 
		\\ \geq{} &  \sum_{r \in D \text{ s.t. } {\ck}_r(\alpha) \neq \emptyset} |{\ck}_r(\alpha) \cap Z_G| \cdot \min_{e \in {\ck}_r(\alpha)} p(e) 
		\\ \geq{} & \sum_{r \in D \text{ s.t. } {\ck}_r(\alpha) \neq \emptyset} |{\ck}_r(\alpha) \cap G | \cdot (1-\eps) \cdot \max_{e \in {\ck}_r(\alpha)} p(e) 
		\\ \geq{} & (1-\eps) \cdot p(G \cap H).
	\end{aligned}
\end{equation}  The third inequality follows from \eqref{Er}, and from Property~\ref{pp:3} in Definition~\ref{def:sub}. The last inequality holds since for every $e \in H$ there is $r \in D$ such that $e \in {\ck}_r(\alpha)$, by \eqref{eq:alphaP} and \eqref{Er}. %The last inequality follows from Claim~\ref{claim:RR} (note that $\{e\} \in \cm$ for all $e \in G \cap H$ by the hereditary property). 
Therefore,  
\begin{equation}
	\label{eq:profitFINAL}
	\begin{aligned}
		p(Z_G)  \geq{} & (1-\eps) \cdot p(G \cap H) \\
		={} & (1-\eps) \cdot \left( p(G) - p(G \setminus H) \right) \\
		 \geq{} & (1-\eps) \cdot p(G)-p(G \setminus H) \\
		 \geq{} & (1-\eps) \cdot p(G)- \eps \cdot \OPT(I)\\
		  ={} & (1-\eps) \cdot \OPT(I)- \eps \cdot \OPT(I)\\
		   ={} & (1-2\eps) \cdot \OPT(I).
	\end{aligned}
\end{equation}
%\begin{equation}
%	\label{eq:profitFINAL}
%	\begin{aligned}
%		p((G \setminus H) \cup Z_G) & = p(G \setminus H)+p(Z_G) \\
%		&  \geq p(G \setminus H)+(1-\eps) \cdot p(G \cap H) \\
%		& \geq (1-\eps) \cdot p(G).
%	\end{aligned}
%\end{equation}
%The first equality follows from Property~\ref{pp:4} of Definition~\ref{def:sub}. 
The first inequality follows from \eqref{eq:profitR}. The last inequality holds by Lemma~\ref{lem:A1}. The second equality holds since $G$ is an optimal solution for $I$. To conclude, by Properties~\ref{pp:1} and~\ref{pp:2} in Definition~\ref{def:sub}, it holds that $Z_G \in \cm_{k}$, and $c \left( Z_G \right) \leq c(G) \leq B$; thus, $Z_G$ is a a solution for $I$. Also, by \eqref{eq:profitFINAL}, it holds that $p \left( Z_G \right) \geq (1-2\eps) \cdot \OPT(I)$ as required (see Definition~\ref{def:REP}). \qed

%Now,  $Z_G$ satisfies Property~\ref{p:I} and~\ref{p:s} of Definition~\ref{def:Replacement} by Properties~\ref{pp:1},~\ref{pp:2} of Definition~\ref{def:sub}, respectively. In addition, $Z_G$ satisfies Property~\ref{p:p} of Definition~\ref{def:Replacement} by \eqref{eq:profitFINAL}. 
%Finally, $Z_G$ satisfies Property~\ref{p:car}  of Definition~\ref{def:Replacement} by 
%$$|Z_G| = \sum_{r \in [\log_{1-\eps} \left(\frac{\eps}{2}\right)+1]} |Z_G \cap {\ck}_r(\alpha)| \leq \sum_{r \in [\log_{1-\eps} \left(\frac{\eps}{2}\right)+1]} |G \cap H \cap {\ck}_r(\alpha)| = |G \cap H|.$$ 
%The first inequality holds since $Z_G$ is a substitution of $G$. The last equality follows from Claim~\ref{claim:RR}. 
%We conclude that $Z_G$ is a replacement of $G$ such that $Z_G \subseteq R$; thus, $R$ is a strict representative set by Definition~\ref{def:Representatives}. Therefore, it follows that $R$ is a representative set by Lemma~\ref{lem:Solution}. 
%\end{proof}

By Lemma~\ref{lem:sufficientRep}, our end goal of constructing a representative set is reduced to efficiently finding exchange sets for all profit classes. This can be achieved by the following result, which is a direct consequence of Theorem 3.6 in \cite{huang2023fpt}.\footnote{The result of~\cite{huang2023fpt} refers to a maximization version of exchange sets; however, the same construction and proof hold for our exchange sets as well.} 
\begin{lemma}
	\label{lem:mainMatroid}
	Given a \textnormal{BM} instance $I = (E, \cm, c,p, B,k,\ell)$, $0<\eps <\frac{1}{2}$, $\frac{\OPT(I)}{2\ell} \leq \alpha \leq \OPT(I) $, and $r \in D (I,\eps)$, there is an algorithm $\textnormal{\textsf{ExSet}}$ which
	returns in time  $\Tilde{O}\left(\ell^{(k-1) \cdot \ell} \cdot k\right) \cdot |I|^{O(1)}$ an exchange set $X$ for $I,\eps,\alpha,$ and $r$, such that $|X| = \Tilde{O} \left(  {\ell^{(k-1) \cdot \ell} \cdot k} \right)$. 
\end{lemma}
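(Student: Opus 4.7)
The plan is to derive Lemma~\ref{lem:mainMatroid} as an immediate consequence of the matchoid representative-set construction of Huang and Ward (Theorem 3.6 of~\cite{huang2023fpt}), specialized to the profit class ${\ck}_r(\alpha)$ with a cost-based linear order. Their theorem, given an $\ell$-matchoid $\cm'$, a total order $\prec$ on its ground set, and a cardinality bound $k$, constructs in time $\Tilde{O}(\ell^{(k-1)\ell}\cdot k)\cdot |I|^{O(1)}$ a subset $X$ of size $\Tilde{O}(\ell^{(k-1)\ell}\cdot k)$ enjoying the pairwise exchange property: for every $\Delta\in\cm'_k$ and every $a\in\Delta\setminus X$, there is $b\in X\setminus\Delta$ with $b\prec a$ and $\Delta-a+b\in\cm'_k$. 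Although~\cite{huang2023fpt} phrases the result in terms of max-weight representative sets, the footnote attached to Lemma~\ref{lem:mainMatroid} records that the same algorithm and analysis yield the pairwise-exchange form required here.

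I would instantiate this construction with $\cm'$ equal to the restriction of $\cm$ to ${\ck}_r(\alpha)$ (that is, with each constituent matroid $(E_i,\cI_i)$ replaced by $(E_i\cap{\ck}_r(\alpha),\,\cI_i\cap 2^{E_i\cap{\ck}_r(\alpha)})$), and with $\prec$ on ${\ck}_r(\alpha)$ defined by $b\prec a$ iff $c(b)<c(a)$, breaking ties arbitrarily; then $b\prec a$ automatically gives $c(b)\le c(a)$. Running the algorithm produces a set $X\subseteq{\ck}_r(\alpha)$ with the required size and within the required running time. To verify Definition~\ref{def:r-set}, fix $\Delta\in\cm_k$ and $a\in(\Delta\cap{\ck}_r(\alpha))\setminus X$. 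By heredity $\Delta\cap{\ck}_r(\alpha)\in\cm'_k$, so the pairwise-exchange property supplies $b\in X\setminus(\Delta\cap{\ck}_r(\alpha))\subseteq ({\ck}_r(\alpha)\cap X)\setminus\Delta$ with $c(b)\le c(a)$ and $(\Delta\cap{\ck}_r(\alpha))-a+b\in\cm'_k$.

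The main obstacle I anticipate is promoting the feasibility of $(\Delta\cap{\ck}_r(\alpha))-a+b$ in the restricted matchoid $\cm'$ to the feasibility of $\Delta-a+b$ in the full matchoid $\cm$. The delicate case is a matroid $M_i=(E_i,\cI_i)\in\cm$ in which $b\in E_i$ but $\Delta$ also contains elements of $E_i\setminus{\ck}_r(\alpha)$, where heredity alone does not suffice. I would handle this by either (i) reading the matroid-by-matroid exchange argument in~\cite{huang2023fpt} and verifying that its swap step operates locally within each $M_i$ and therefore lifts to $\cm$, or (ii) the cleaner alternative of running the Huang-Ward construction directly on the unrestricted matchoid $\cm$ but on the sub-ground-set ${\ck}_r(\alpha)$ equipped with the cost order $\prec$; in this form the output $X\subseteq{\ck}_r(\alpha)$ satisfies Definition~\ref{def:r-set} verbatim and no lifting step is required, while both the size bound $\Tilde{O}(\ell^{(k-1)\ell}\cdot k)$ and the running-time bound of Theorem 3.6 in~\cite{huang2023fpt} remain intact.
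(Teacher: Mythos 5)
Your proposal is correct and follows essentially the same route as the paper, which gives no proof of this lemma beyond citing Theorem 3.6 of Huang and Ward and noting in a footnote that their maximization-flavoured exchange property carries over verbatim to the cost-based one needed here. Your alternative (ii) --- running the construction on the sub-ground-set ${\ck}_r(\alpha)$ of the \emph{unrestricted} matchoid with the cost order, so that no lifting step is needed --- is the right way to read that citation, and it addresses a genuine subtlety (feasibility of $\Delta-a+b$ in the full matchoid when $\Delta$ contains elements outside the profit class) that the paper passes over in silence.
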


\begin{algorithm}[h]
	\caption{$\textsf{RepSet}(I = (E, \cm, c,p, B,k,\ell),\eps)$}
	\label{alg:findRep}
	\SetKwInOut{Input}{input}
	\SetKwInOut{Output}{output}
	
	\Input{A BM instance $I$, and an error parameter $0<\eps<\frac{1}{2}$. }%\ariel{I removed $\alpha$, it was not required. } }
	
	\Output{A representative set $R$ of $I$ and $\eps$.}
	
%	Compute a $2$-approximation $S^*$ for $I$ using a PTAS for BM  with parameter $\eps' = \frac{1}{2}$.\label{step:REPAPP}
%	
%	Set $\alpha \leftarrow p(S^*)$. \label{step:REPalpha}
	
%		\If{$|I|<\ell^{(k-1) \cdot \ell}$}{Return $E$.\label{step:ifR}}
%	
%	Initialize $R \leftarrow \emptyset$.

	\If{$\ell^{(k-1) \cdot \ell } \cdot k^2  \cdot \eps^{-2}> |I|$}{Return $E$ \label{step:IF}}
	
		Compute $\alpha \leftarrow \textsf{ApproxBM}(I)$.\label{step:A}
	
	\For{$r \in D (I,\eps)$\label{step:REPforr}}{
		
	$R \leftarrow R \cup \textsf{ExSet}(I,\eps,\alpha,r)$\label{step:match}. 
	
	}
	
	Return $R$ \label{step:REPret}
\end{algorithm}

Using Lemmas~\ref{lem:sufficientRep} and~\ref{lem:mainMatroid}, a representative set of $I$ can be constructed as follows. 
If the parameters $\ell$ and $k$ are too high w.r.t. $|I|$, return the trivial representative set $E$ in polynomial time. Otherwise, compute an approximation for $\OPT(I)$, and define the profit classes. Then, the representative set is constructed by finding an exchange set for each profit class. The pseudocode of the algorithm is given in Algorithm~\ref{alg:findRep}.

 \begin{lemma}
	\label{lem:main}
	Given a \textnormal{BM} instance $I = (E, \cm, c,p, B,k,\ell)$, and $0<\eps <\frac{1}{2}$, Algorithm ~\ref{alg:findRep}  returns in time %$\Tilde{O}\left(\ell^{(k-1) \cdot \ell +1} \cdot k \cdot \eps^{-2} \right) \cdot |I|^{O(1)}$
	$|I|^{O(1)} $ a representative set $R \subseteq E$ of $I$ and $\eps$ such that $|R| = \Tilde{O}\left(\ell^{(k-1) \cdot \ell} \cdot k^2 \cdot \eps ^{-2} \right)$. %Moreover, if $\alpha = \alpha$ then $R$ is a representative set of $I$ and $\eps$.  
\end{lemma}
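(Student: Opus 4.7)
The plan is to split on the branching at Line~\ref{step:IF} of Algorithm~\ref{alg:findRep}, then to assemble the previously proven building blocks---Lemmas~\ref{lem:CA}, \ref{lem:ProfitBound}, \ref{lem:sufficientRep}, and \ref{lem:mainMatroid}---into the combined size/runtime bound. The proof itself is not deep; the role of this lemma is to verify that the algorithmic pipeline of the section delivers what was promised.

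First, in the case $\ell^{(k-1)\cdot \ell} \cdot k^2 \cdot \eps^{-2} > |I|$, the algorithm outputs $R = E$. I would verify directly from Definition~\ref{def:REP} that $E$ is a representative set: taking $S$ to be any optimal solution of $I$ gives $S \subseteq E$ and $p(S) = \OPT(I) \geq (1 - 2\eps)\OPT(I)$. The size bound $|R| = |E| \leq |I| \leq \ell^{(k-1)\ell} \cdot k^2 \cdot \eps^{-2}$ is immediate from the branch condition, and this branch clearly runs in time $|I|^{O(1)}$.

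In the nontrivial case $\ell^{(k-1)\cdot \ell} \cdot k^2 \cdot \eps^{-2} \leq |I|$, I would first invoke Lemma~\ref{lem:CA} to ensure that the value $\alpha$ computed in Line~\ref{step:A} satisfies $\OPT(I)/(2\ell) \leq \alpha \leq \OPT(I)$ in time $|I|^{O(1)}$, so that profit classes are well-defined. The key observation for correctness is that the profit classes $\{\mathcal{K}_r(\alpha)\}_{r \in D}$ are pairwise disjoint by their definition in \eqref{Er}, and $\textsf{ExSet}(I,\eps,\alpha,r) \subseteq \mathcal{K}_r(\alpha)$ by Lemma~\ref{lem:mainMatroid}. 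Consequently $R \cap \mathcal{K}_r(\alpha) = \textsf{ExSet}(I,\eps,\alpha,r)$ for every $r \in D$, and each such intersection is an exchange set for $I,\eps,\alpha,r$. Lemma~\ref{lem:sufficientRep} then certifies that $R$ is a representative set of $I$ and $\eps$.

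For the cardinality and runtime, I would combine the bound $|D| = O(k \cdot \ell \cdot \eps^{-2})$ from Lemma~\ref{lem:ProfitBound} with the per-iteration output size $\tilde{O}(\ell^{(k-1)\ell} \cdot k)$ from Lemma~\ref{lem:mainMatroid} to obtain the claimed $|R| = \tilde{O}(\ell^{(k-1)\ell} \cdot k^2 \cdot \eps^{-2})$. For runtime, $\textsf{ApproxBM}$ costs $|I|^{O(1)}$ and the loop performs $|D|$ iterations of $\textsf{ExSet}$, each of cost $\tilde{O}(\ell^{(k-1)\ell} \cdot k) \cdot |I|^{O(1)}$; the branch condition $\ell^{(k-1)\ell} k^2 \eps^{-2} \leq |I|$ lets us absorb all parameter-dependent factors into $|I|^{O(1)}$. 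The only conceptually nontrivial step is the disjointness observation that reduces the union of exchange sets to a pointwise exchange-set property on each $\mathcal{K}_r(\alpha)$; the Line~\ref{step:IF} guard is the device that converts a parameterized cardinality bound into a genuinely polynomial runtime in the hard regime.
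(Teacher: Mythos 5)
Your proof is correct and follows essentially the same route as the paper: split on the guard at Line~\ref{step:IF}, use Lemma~\ref{lem:CA} for $\alpha$, Lemma~\ref{lem:mainMatroid} for the per-class exchange sets and their size, Lemma~\ref{lem:sufficientRep} for representativeness, and the bound on $|D|$ for the cardinality and runtime. Your explicit disjointness argument showing $R \cap \mathcal{K}_r(\alpha) = \textsf{ExSet}(I,\eps,\alpha,r)$ is a small point the paper leaves implicit (and is also unnecessary in the strict sense, since any superset of an exchange set within $\mathcal{K}_r(\alpha)$ is again an exchange set), but it is a valid and slightly more careful way to close that step.
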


\begin{proof}
Clearly, if $\ell^{(k-1) \cdot \ell } \cdot k^2 \cdot \eps^{-2} > |I|$, then by Step~\ref{step:IF} the algorithm runs in time $|I|^{O(1)}$ and returns the trivial representative set $E$. Thus, we may assume below that
$\ell^{(k-1) \cdot \ell } \cdot k^2 \cdot \eps^{-2} \leq  |I|$. The running time of Step~\ref{step:A} is $|I|^{O(1)}$ by Lemma~\ref{lem:CA}. Each iteration of the \textbf{for} loop in Step~\ref{step:REPforr} can be computed in time $\Tilde{O}(\ell^{(k-1) \cdot \ell} \cdot k) \cdot |I|^{O(1)}$, by Lemma~\ref{lem:mainMatroid}. Hence, as we have $|D| = |D(I,\eps)|$ iterations of the  \textbf{for} loop, the running time of the algorithm is bounded by 
$$
|D| \cdot \Tilde{O}(\ell^{(k-1) \cdot \ell} \cdot k) \cdot |I|^{O(1)} \leq
(2 \ell \cdot k \cdot \eps^{-2} +1) \cdot  \Tilde{O}(\ell^{(k-1) \cdot \ell} \cdot k)
\cdot |I|^{O(1)}
%O(k \cdot \ell \cdot \eps^{-2}) \cdot \Tilde{O}(\ell^{(k-1) \cdot \ell} \cdot k) \cdot |I|^{O(1)} 
= \Tilde{O}\left(\ell^{(k-1) \cdot \ell +1} \cdot k^2 \cdot \eps^{-2} \right) \cdot |I|^{O(1)}.
$$ 
The first inequality follows from \eqref{eq:ing} and \eqref{eq:upper_bound_D}. 
As in this case $\ell^{(k-1) \cdot \ell } \cdot k^2 \cdot \eps^{-2} \leq  |I|$,
we have the desired running time. 

For the cardinality of $R$, note that by Lemma~\ref{lem:CA} $\OPT(I) \geq \alpha \geq \frac{\OPT(I)}{2 \ell}$. Thus, by Lemma~\ref{lem:mainMatroid}, for all $r \in D$,  $\textsf{ExSet}(I,\eps,\alpha,r)$ is an exchange set satisfying 
$\left|  \textsf{ExSet}(I,\eps,\alpha,r) \right| = \Tilde{O}(\ell^{(k-1) \cdot \ell} \cdot k)$. Then, 

\begin{equation*}
|R| \leq |D| \cdot \Tilde{O}(\ell^{(k-1) \cdot \ell} \cdot k) \leq 
(2 \ell \cdot k \cdot \eps^{-2} +1) \cdot \Tilde{O}(\ell^{(k-1) \cdot \ell} \cdot k)
= \Tilde{O}\left(\ell^{(k-1) \cdot \ell +1} \cdot k^2 \cdot \eps^{-2} \right). 
\end{equation*}  
\noindent
The second inequality follows from \eqref{eq:ing} and \eqref{eq:upper_bound_D}. 

To conclude, we show that $R$ is a representative set. By Lemma~\ref{lem:mainMatroid}, for all $r \in D$, it holds that $\textsf{ExSet}(I,\eps,\alpha,r)$ is an exchange set for $I,\eps,\alpha$, and $r$. Therefore, 
$R \cap {\ck}_r(\alpha)$ is an exchange set for $I,\eps,\alpha$, for all $r \in  D$. Hence, by Lemma~\ref{lem:sufficientRep}, $R$ is a representative set of $I$ and~$\eps$. 
\end{proof}

 \noindent{\bf Proof of Lemma~\ref{lem:Main}:} The statement of the lemma follows  from Lemma~\ref{lem:main}. \qed

\section{An FPT Approximation Scheme}
\label{sec:FPAS}
In this section we use the representative set constructed by Algorithm~\ref{alg:findRep}
to obtain an FPAS for BM. For the discussion below, fix a BM  instance $I = (E, \cm, c,p, B,k,\ell)$ and an error parameter $0<\eps <\frac{1}{2}$. Given the representative set $R$ for $I$ and $\eps$ output by
%derived from applying 
algorithm \textsf{RepSet}, we derive an FPAS by exhaustive enumeration over all solutions of $I$ within $R$. The pseudocode of our FPAS is given in Algorithm~\ref{alg:EPTAS}. 

\begin{algorithm}[h]
	\caption{$\textsf{FPAS}(I = (E, \cm, c,p, B,k,\ell),\eps)$}
	\label{alg:EPTAS}
	%Compute the instance $\mathcal{J} = (I,\cG,s,\bar{p},m,c)$. 
	
	%Construct $q,T$ using a $2$-approximation for BMI. 
	
	\SetKwInOut{Input}{input}
	
	\SetKwInOut{Output}{output}
	
	\Input{A BM instance $I$ and an error parameter $0<\eps<\frac{1}{2}$.}
	
	\Output{A solution for $I$.}

	Initialize an empty solution $A \leftarrow \emptyset$.\label{step:init}
	
%	\For{$\OPT(I) \geq \alpha \geq \frac{\OPT(I)}{2 \ell}$}{

	Construct $R \leftarrow \textsf{RepSet} (I,\eps)$.\label{step:rep}

	\For{$F \subseteq R \textnormal{ s.t. } |F| \leq k \textnormal{ and } F \textnormal{ is a solution of } I $ \label{step:for}}{

	%	Find a solution for $I_F(\alpha)$ by $T_F \leftarrow \textnormal{\textsf{NonProfitableSolver}}(I_F(\alpha))$.\label{step:vertex}
		
	%	Let $K_F \leftarrow T_F \cup F$.\label{step:Cf}

		\If{$p\left(F\right) > p(A)$\label{step:iff}}{
			
			Update $A \leftarrow F$\label{step:update}
			
		}

	}
	
%}
	Return $A$.\label{step:retA}
\end{algorithm}

\begin{lemma}
	\label{thm:EPTAS}
	Given a \textnormal{BM} instance $I = (E, \cm, c,p, B,k,\ell)$ and $0<\eps<\frac{1}{2}$, Algorithm~\ref{alg:EPTAS} returns in time $|I|^{O(1)} \cdot  \Tilde{O} \left( \ell^{k^2 \cdot \ell} \cdot k^{2k} \cdot \eps^{-2k} \right)$ a solution for $I$ of profit at least $(1-2\eps) \cdot \OPT(I)$. 
\end{lemma}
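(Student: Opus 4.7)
The plan is to verify the two claims of the lemma separately: correctness of the returned profit, and the stated running time bound. Both follow from Lemma~\ref{lem:main} (equivalently, Lemma~\ref{lem:Main}), combined with a direct counting of the enumeration loop of Algorithm~\ref{alg:EPTAS}.

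For correctness, I would begin by letting $R$ denote the set returned by $\textsf{RepSet}(I,\eps)$ in Step~\ref{step:rep}. By Lemma~\ref{lem:main}, $R$ is a representative set of $I$ and $\eps$ in the sense of Definition~\ref{def:REP}. Therefore there exists a solution $S$ of $I$ with $S \subseteq R$, $|S| \leq k$, $c(S) \leq B$, $S \in \cm_k$, and $p(S) \geq (1-2\eps)\cdot \OPT(I)$. In particular, $S$ is one of the sets $F$ considered by the \textbf{for} loop in Step~\ref{step:for}, so by Steps~\ref{step:iff}--\ref{step:update} the set $A$ returned in Step~\ref{step:retA} satisfies $p(A) \geq p(S) \geq (1-2\eps)\cdot \OPT(I)$. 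Moreover, every $F$ ever assigned to $A$ passes the test in Step~\ref{step:for} of being a solution of $I$, so $A$ is itself a solution.

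For the running time, by Lemma~\ref{lem:main} Step~\ref{step:rep} takes time $|I|^{O(1)}$ and yields a representative set of cardinality $|R| = \tilde{O}\bigl(\ell^{(k-1)\ell} \cdot k^{2} \cdot \eps^{-2}\bigr)$. The number of subsets $F \subseteq R$ with $|F| \leq k$ is at most
\[
\sum_{i=0}^{k} \binom{|R|}{i} \;\leq\; (k+1)\cdot |R|^{k} \;=\; \tilde{O}\Bigl( \bigl(\ell^{(k-1)\ell}\cdot k^{2}\cdot \eps^{-2}\bigr)^{k} \Bigr) \;=\; \tilde{O}\bigl(\ell^{k(k-1)\ell} \cdot k^{2k} \cdot \eps^{-2k}\bigr),
\]
which is bounded by $\tilde{O}\bigl(\ell^{k^{2}\ell} \cdot k^{2k} \cdot \eps^{-2k}\bigr)$. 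For each such $F$, checking whether $F$ is a solution (i.e., $c(F)\leq B$ and $F \in \cm_k$) amounts to a constant number of arithmetic operations plus one membership query to each matroid of $\cm$, all of which take $|I|^{O(1)}$ time. Multiplying through yields the claimed bound $|I|^{O(1)} \cdot \tilde{O}\bigl(\ell^{k^{2}\ell} \cdot k^{2k} \cdot \eps^{-2k}\bigr)$.

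The proof is essentially a bookkeeping exercise, so no genuine technical obstacle appears; the only point requiring care is the arithmetic in the enumeration bound, namely simplifying $k(k-1)\ell$ to the cleaner exponent $k^{2}\ell$ used in the statement, and absorbing the polylogarithmic factors in $|R|$ into the $\tilde{O}(\cdot)$ notation.
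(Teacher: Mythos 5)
Your proposal is correct and follows essentially the same route as the paper, which likewise splits the argument into a correctness part (the representative set $R$ contains a near-optimal solution $S$ with $|S|\leq k$, hence $S$ is enumerated) and a running-time part bounding the number of subsets of $R$ of size at most $k$ by roughly $|R|^{k}$. The only cosmetic difference is that the paper bounds the enumeration by $(|R|+1)^{k}$ whereas you use $(k+1)\cdot|R|^{k}$; both yield the same $\Tilde{O}\left(\ell^{k^{2}\cdot\ell}\cdot k^{2k}\cdot\eps^{-2k}\right)$ bound.
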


We can now prove our main result.

 \noindent{\bf Proof of Lemma~\ref{thm:FPAS}:} The proof follows 
from Lemma~\ref{thm:EPTAS} by using in Algorithm~\ref{alg:EPTAS} an error parameter 
 $\eps' = \frac{\eps}{2}$. \qed

For the proof of Lemma~\ref{thm:EPTAS}, we use the next auxiliary lemmas.  

\begin{lemma}
	\label{thm:aux1}
	Given a \textnormal{BM} instance $I = (E, \cm, c,p, B,k,\ell)$ and $0<\eps<\frac{1}{2}$, Algorithm~\ref{alg:EPTAS} returns a solution for $I$ of profit at least $(1-2\eps) \cdot \OPT(I)$. 
\end{lemma}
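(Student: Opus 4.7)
The plan is to reduce the claim directly to the defining property of a representative set. First, I would invoke Lemma~\ref{lem:main} (equivalently Lemma~\ref{lem:Main}) on the call in Step~\ref{step:rep} of Algorithm~\ref{alg:EPTAS}: this guarantees that the set $R$ constructed by $\textsf{RepSet}(I,\eps)$ is in fact a representative set of $I$ and $\eps$. By Definition~\ref{def:REP}, there then exists a solution $S$ of $I$ satisfying $S \subseteq R$ and $p(S) \geq (1-2\eps)\cdot \OPT(I)$. Since $S$ is a solution of $I$, it lies in $\cm_k$ and hence $|S|\leq k$, and it obeys the budget $c(S)\leq B$.

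Next, I would argue that this particular $S$ is one of the candidate sets $F$ enumerated by the \textbf{for} loop in Step~\ref{step:for}: indeed $S\subseteq R$, $|S|\leq k$, and $S$ is a solution of $I$, which is precisely the condition used in the loop. Therefore, at the iteration where $F = S$ is considered, the variable $A$ is updated (or was already updated to an even better solution) so that after that iteration $p(A) \geq p(S)$. Because $A$ is only ever overwritten by strictly better solutions in Step~\ref{step:update}, its profit can only grow, and so the returned value satisfies $p(A) \geq p(S) \geq (1-2\eps)\cdot \OPT(I)$.

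Finally, I would verify that $A$ is itself a valid solution of $I$: it starts as $\emptyset$ (which is a solution since $\emptyset \in \cm_k$ and $c(\emptyset)=0\leq B$) and is only ever replaced in Step~\ref{step:update} by some $F$ which, by the loop condition in Step~\ref{step:for}, is also a solution of $I$. Hence the output $A$ is a feasible solution of $I$ with $p(A) \geq (1-2\eps)\cdot \OPT(I)$, as required.

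There is no real obstacle in this argument; the entire substantive work has been carried out in the construction of the representative set (Lemma~\ref{lem:main}) and in Lemma~\ref{lem:sufficientRep}. The only thing to be careful about is to match the loop condition in Step~\ref{step:for} to the three requirements that $S$ inherits from being a solution with $S\subseteq R$, and to note that the monotone update rule in Step~\ref{step:iff}--\ref{step:update} ensures the final $A$ is no worse than any candidate ever considered, in particular no worse than $S$.
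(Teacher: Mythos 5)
Your proposal is correct and follows essentially the same route as the paper's proof: invoke Lemma~\ref{lem:main} to get that $R$ is a representative set, extract the guaranteed solution $S\subseteq R$ from Definition~\ref{def:REP}, observe that $S$ is enumerated in the \textbf{for} loop so the monotone update gives $p(A)\geq p(S)\geq (1-2\eps)\cdot\OPT(I)$, and note that $A$ is always a solution since it is either $\emptyset$ or some feasible $F$. No gaps; this matches the paper's argument.
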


\begin{proof}
	By Lemma~\ref{lem:main}, it holds that $R = \textsf{RepSet}(I,\eps)$ is a representative set of $I$ and $\eps$. Therefore, by Definition~\ref{def:REP}, there is a solution $S$ for $I$ such that $S \subseteq R$, and \begin{equation}
		\label{eq:profitS}
		p\left(S\right) \geq (1-2\eps) \cdot \OPT(I).
	\end{equation} Since $S$ is a solution for $I$, it follows that $S \in \cm_{k}$ and therefore $|S| \leq k$. Thus, 
there is an iteration of Step~\ref{step:for} in which $F = S$, and therefore the set $A$ returned by the algorithm satisfies  $p(A) \geq p(S) \geq (1-2\eps) \cdot \OPT(I)$. Also, the set $A$ returned by the algorithm must be a solution for $I$: If $A = \emptyset$ the claim trivially follows since $\emptyset$ is a solution for $I$.
Otherwise, the value of $A$ has been updated in Step~\ref{step:update} of Algorithm~\ref{alg:EPTAS} to be some set $F \subseteq R$, but this step is reached only if $F$ is a solution for $I$.  
\end{proof}

\begin{lemma}
	\label{thm:aux2}
	Given a \textnormal{BM} instance $I = (E, \cm, c,p, B,k,\ell)$ and $0<\eps<\frac{1}{2}$, the running time of Algorithm~\ref{alg:EPTAS} is $|I|^{O(1)} \cdot  \Tilde{O} \left( \ell^{k^2 \cdot \ell} \cdot k^{2k} \cdot \eps^{-2k} \right)$.
\end{lemma}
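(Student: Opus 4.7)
\noindent\textbf{Proof plan for Lemma~\ref{thm:aux2}.} The algorithm has only two substantive cost components: the construction of the representative set in Step~\ref{step:rep}, and the exhaustive enumeration in the \textbf{for} loop of Step~\ref{step:for}. First I would bound the cost of Step~\ref{step:rep} by invoking Lemma~\ref{lem:main}: constructing $R = \textsf{RepSet}(I,\eps)$ takes time $|I|^{O(1)}$, and yields a set of size $|R| = \Tilde{O}\left(\ell^{(k-1)\cdot\ell}\cdot k^2\cdot \eps^{-2}\right)$. This gives the needed handle on the size of the search space for the enumeration.

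Next I would bound the number of iterations of the \textbf{for} loop by the number of subsets of $R$ of cardinality at most $k$, namely
\[
\sum_{i=0}^{k}\binom{|R|}{i} \;=\; O\!\left(|R|^{k}\right).
\]
Within each iteration, the algorithm tests whether a candidate $F\subseteq R$ with $|F|\leq k$ is a solution for $I$ (feasibility in $\cm$ via the membership oracles of the constituent matroids, plus verifying $c(F)\leq B$) and compares $p(F)$ against $p(A)$; all of this is done in $|I|^{O(1)}$ time. Hence the total time of the \textbf{for} loop is $|R|^{k}\cdot |I|^{O(1)}$.

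Finally I would substitute the bound on $|R|$ and simplify:
\[
|R|^{k} \;=\; \Tilde{O}\!\left(\ell^{(k-1)\cdot\ell\cdot k}\cdot k^{2k}\cdot \eps^{-2k}\right)\;\leq\;\Tilde{O}\!\left(\ell^{k^{2}\cdot\ell}\cdot k^{2k}\cdot \eps^{-2k}\right),
\]
using $(k-1)\cdot\ell\cdot k\leq k^{2}\cdot\ell$, and absorbing the polylogarithmic factors arising from $|R|^{k}$ into the $\Tilde{O}(\cdot)$ notation. Combined with the $|I|^{O(1)}$ cost of Step~\ref{step:rep} and the per-iteration cost, this yields the claimed running time of $|I|^{O(1)}\cdot \Tilde{O}\!\left(\ell^{k^{2}\cdot\ell}\cdot k^{2k}\cdot \eps^{-2k}\right)$.

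The proof is essentially a routine accounting exercise; there is no real obstacle. The only point that requires mild care is the manipulation of the exponent $(k-1)\cdot\ell\cdot k$ into the more convenient $k^{2}\cdot\ell$ used in the statement, and the observation that matchoid feasibility of a candidate $F$ decomposes into a polynomial number of single-matroid membership queries (one per matroid $M_i\in\cm$ with $F\cap E_i$), so that the per-iteration overhead is genuinely polynomial in $|I|$.
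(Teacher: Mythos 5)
Your proposal is correct and follows essentially the same route as the paper: bound the number of loop iterations by the number of subsets of $R$ of size at most $k$ (roughly $|R|^{k}$, versus the paper's $(|R|+1)^{k}$), plug in the bound on $|R|$ from Lemma~\ref{lem:main}, note that each iteration (feasibility test, budget check, profit comparison) costs $|I|^{O(1)}$, and absorb the exponent via $(k-1)\cdot\ell\cdot k\leq k^{2}\cdot\ell$. No gaps.
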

\begin{proof}
Let $$W' = \big\{F \subseteq R~\big|~ F \in \cm_k, c(F) \leq B\big\}$$ be the solutions considered in Step~\ref{step:for} of Algorithm~\ref{alg:EPTAS}, and let $$W = \big\{F \subseteq R~\big|~ |F| \leq k\big\}.$$ Observe that the number of iterations of Step~\ref{step:for} of Algorithm~\ref{alg:EPTAS} is bounded by $|W|$, since $W' \subseteq W$ and for each $F \in W$ we can verify in polynomial time if $F \in W'$. Thus, it suffices to upper bound $W$. 
	
	By a simple counting argument, we have that 
\begin{equation}
			\label{eq:subR}
			\begin{aligned}
				|W| \leq{} & 
				\left( 
				|R|+1\right)^{k} \\
				\leq{} & 
				\Tilde{O} \left( \left( \ell^{(k-1) \cdot \ell +1} \cdot k^2 \cdot \eps^{-2} \right)^k \right) \\
				={} & \Tilde{O} \left( \ell^{k^2 \cdot \ell} \cdot k^{2k} \cdot \eps^{-2k}
				\right)				
			\end{aligned}
		\end{equation} 
		The first equality follows from Lemma~\ref{lem:main}. %The third inequality holds since $0<\eps<\frac{1}{2}$. 
		Hence, by \eqref{eq:subR}, the number of iterations of the {\bf for} loop  in Step~\ref{step:for} is bounded by $ \Tilde{O} \left( \ell^{k^2 \cdot \ell} \cdot k^{2k} \cdot \eps^{-2k} \right)$. In addition, the running time of each iteration is at most $|I|^{O(1)}$. Finally, the running time of the steps outside the {\bf for} loop is $|I|^{O(1)}$, by
		Lemma~\ref{lem:main}. Hence, the running time of Algorithm~\ref{alg:EPTAS} can be bounded by $|I|^{O(1)} \cdot  \Tilde{O} \left( \ell^{k^2 \cdot \ell} \cdot k^{2k} \cdot \eps^{-2k} \right) $.  
	\end{proof} 

\noindent{\bf Proof of Lemma~\ref{thm:EPTAS}:} The proof follows from Lemmas~\ref{thm:aux1} and~\ref{thm:aux2}. \qed

\section{Hardness Results}

\label{sec:hard}
	
In this section we prove Lemma~\ref{lem:hard} and Lemma~\ref{lem:DM}. In the proof of Lemma~\ref{lem:hard}, we use a reduction from the $k$-{\em subset sum (KSS)} problem. The input for KSS is a set $X = \{x_1, \ldots, x_n\}$ of strictly positive integers and two positive integers $T,k>0$. We need to decide if there is a subset $S \subseteq [n], |S| = k$ such that $\sum_{i \in S} x_i = T$, where the problem is parameterized by $k$. KSS is known to be W[1]-hard \cite{downey1995fixed}. 
	
	\noindent{\bf Proof of Lemma~\ref{lem:hard}:} Let $U$ be a KSS instance with the set of numbers $E = [n]$, target value $T$, and $k$. We define the following BM instance $I = (E, \cm, c,p, B,k,\ell)$,.
	 \begin{enumerate}
		\item $\cm$ is a $1$-matchoid $\cm = \{(E,\cI)\}$ such that $\cI = 2^E$. That is, $\cm$ is a single uniform matroid whose independent sets are all possible subsets of $E$.
		\item For any $i \in E = [n]$ define $c(i) = p(i) = x_i+2 \cdot \sum_{j \in [n]} x_j$. 
		\item Define the budget as $B = T+2 k \cdot \sum_{j \in [n]} x_j$.
	\end{enumerate} 
	\begin{claim}
		\label{claim:1}
		If there is a solution for $U$ then there is a solution for $I$ of profit $B$. 
	\end{claim}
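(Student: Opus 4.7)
The plan is to take a purported KSS solution, i.e., a set $S \subseteq [n]$ with $|S|=k$ and $\sum_{i \in S} x_i = T$, and show that the same set $S$, viewed as a subset of the ground set $E = [n]$ of the BM instance $I$, is a solution for $I$ and attains profit exactly $B$. The argument is a direct verification of the three conditions (feasibility in the matchoid, budget, profit).

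First I would verify feasibility in $\cm_k$. Since $\cm$ consists of a single uniform matroid with independent sets $\cI = 2^E$, every subset of $E$ is independent; combined with $|S|=k$, this gives $S \in \cm_k$ immediately. Next I would compute the cost:
\[
c(S) \;=\; \sum_{i \in S} \left(x_i + 2\sum_{j \in [n]} x_j\right) \;=\; \sum_{i \in S} x_i \;+\; k\cdot 2\sum_{j \in [n]} x_j \;=\; T + 2k\sum_{j \in [n]} x_j \;=\; B,
\]
so in particular $c(S) \leq B$ and $S$ is a solution for $I$. Finally, since $p$ is defined to coincide with $c$ elementwise, the same calculation gives $p(S) = B$, which is the claimed profit.

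This direction of the reduction involves no subtlety beyond bookkeeping. The design choice $c(i) = p(i) = x_i + 2\sum_{j \in [n]} x_j$ is tailored precisely so that $k$ elements of $E$ always contribute the fixed overhead $2k\sum_{j \in [n]} x_j$ and the KSS target $T$ surfaces as the residual cost and profit. The genuine work in the overall hardness proof lies in the converse implication, where the large additive offset $2\sum_j x_j$ is what forces any BM solution of profit $B$ to have cardinality exactly $k$ and thus encode a KSS solution; that is not needed for the statement at hand, so I expect no real obstacle in the present claim.
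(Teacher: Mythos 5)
Your proof is correct and matches the paper's own argument essentially verbatim: both verify $S \in \cm_k$ via the uniform matroid and compute $c(S) = p(S) = T + 2k\sum_{j\in[n]} x_j = B$. No issues.
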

	\begin{claimproof}
		Let $S \subseteq [n], |S| = k$ such that $\sum_{i \in S} x_i = T$. %In addition, let $D = \{x_i~|~i \in S\}$. 
		Then, $$c(S) = p(S) = \sum_{i \in S} \left( x_i+2 \cdot \sum_{j \in [n]} x_j  \right) = T+|S| \cdot 2 \cdot \sum_{j \in [n]} x_j = T+2 k\cdot \sum_{j \in [n]} x_j = B.$$ By the above, and as $S \in \cm_k$, $S$ is also a solution for $I$ of profit exactly $B$. 
	\end{claimproof}
	\begin{claim}
		\label{claim:2}
		If there is a solution for $I$ of profit at least $B$ then there is a solution for $U$. 
	\end{claim}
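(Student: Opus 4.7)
The plan is to take a solution $S$ for $I$ with $p(S) \geq B$ and recover from it a KSS solution by a direct algebraic argument that exploits two features of the reduction: the identity $c(i)=p(i)$ for every element, and the inflation of each weight by the offset $2\sum_{j\in[n]} x_j$, which is large enough to make $|S|=k$ the only viable cardinality.

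First I would observe that because $c(i)=p(i)$ for all $i$, any solution $S$ for $I$ satisfies $p(S)=c(S)\leq B$; combined with the hypothesis $p(S)\geq B$, this pins down the exact equality $c(S)=p(S)=B$. Expanding the definition of $c$ and $B$, this equality reads
\begin{equation*}
\sum_{i\in S} x_i \;+\; 2|S|\cdot\sum_{j\in[n]} x_j \;=\; T + 2k\cdot\sum_{j\in[n]} x_j,
\end{equation*}
which I would rearrange into
\begin{equation*}
\sum_{i\in S} x_i \;=\; T + 2(k-|S|)\cdot\sum_{j\in[n]} x_j.
\end{equation*}

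The main (in fact only) obstacle is verifying $|S|=k$; I would deal with it via a size-parity argument. Since $S\in\cm_k$ we already know $|S|\leq k$, so it suffices to rule out $|S|<k$. If $|S|<k$, then the right-hand side of the displayed equation is at least $T+2\sum_{j\in[n]} x_j$, which strictly exceeds $\sum_{j\in[n]} x_j$ (as $T>0$ and the $x_j$ are strictly positive). On the other hand, the left-hand side is at most $\sum_{j\in[n]} x_j$ because $S\subseteq[n]$. This contradiction forces $|S|=k$, and then the displayed equation collapses to $\sum_{i\in S} x_i=T$, exhibiting $S$ as a solution for $U$ and completing the proof. Together with Claim~\ref{claim:1} and the observation that the reduction runs in polynomial time while preserving the parameter $k$, this yields $W[1]$-hardness of BM parameterized by $k$, proving Lemma~\ref{lem:hard}.
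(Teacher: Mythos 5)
Your proof is correct and follows essentially the same route as the paper's: establish $p(S)=c(S)=B$ from the identity of costs and profits, rule out $|S|<k$ by exploiting the large offset $2\sum_{j\in[n]}x_j$, and conclude $\sum_{i\in S}x_i=T$. The only difference is cosmetic (you rearrange the equality and compare both sides to $\sum_{j\in[n]}x_j$, while the paper bounds $p(F)$ directly below $B$), so nothing further is needed.
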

	\begin{claimproof}
		Let $F$ be a solution for $I$ of profit at least $B$. Then, $p(F) = c(F) \leq B$, since $F$ satisfies the budget constraint. As $p(F) \geq B$, we conclude that 
\begin{equation}
\label{eq:profit_F_equals_B}		
p(F) = c(F) = B.
\end{equation}
We now show that $F$ is also a solution for~$U$. First, assume towards contradiction that $|F| \neq k$. If $|F|< k$ then 
$$
p(F) = \sum_{i \in F} x_i+|F| \cdot 2 \cdot \sum_{j \in [n]} x_j \leq \sum_{i \in F} x_i+(k-1) \cdot 2 \cdot \sum_{j \in [n]} x_j \leq 2 k\cdot \sum_{j \in [n]} x_j < B.
$$
We reach a contradiction to \eqref{eq:profit_F_equals_B}. 
%
%Alternatively, if $|F|>k$ then 
%		% \ariel{I changed the expression in blue - Ilan- please verify}
%		$$c(F) = \sum_{i \in F} i+|F| \cdot 2 \cdot \sum_{i \in [n]} i \geq \sum_{i \in \textcolor{blue}{F}} i+(k+1) \cdot 2 \cdot \sum_{i \in [n]} i \geq 2 \sum_{i \in [n]} i +2 k\cdot \sum_{i \in [n]} i > B.$$ 
%		This is also a contradiction, since $F$ is a solution for $I$. 
Since $F$ is a solution for $I$ it holds that $F \in \cm_{k}$; thus, $|F| \leq k$. By the above, $|F| = k$.  Therefore,
		$$
		\sum_{i \in F} x_i = c(F) - |F| \cdot 2 \cdot \sum_{j \in [n]} x_j = c(F) -  2 k \cdot \sum_{j \in [n]} x_j = B - 2 k \cdot \sum_{j \in [n]} x_j = T.
		$$	
	\end{claimproof}
	By Claims~\ref{claim:1} and~\ref{claim:2}, there is a solution for $U$ if and only if there is a solution for $I$ of profit at least $B$. Furthermore, the construction of $I$ can be done in polynomial time in the encoding size of $U$. Hence, an FPT algorithm which finds an optimal solution for $I$ can decide the instance $U$ in FPT time.
	As KSS is known to be W[1]-hard \cite{downey1995fixed}, we conclude that BM is also W[1]-hard.  \qed
	
In the proof of \Cref{lem:DM} we use a lower bound on the kernel size of 	
 {\em Perfect $\ell$-Dimensional Matching} ($\ell$-PDM), due to Dell and Marx \cite{dell2012kernelization,dell2018kernelization_arxiv}. The input for the problem consists of the finite sets $U_1,\ldots U_{\ell}$ and  $E\subseteq U_1\times \ldots \times U_{\ell}$. Also, we have an  
$\ell$-dimensional matching constraint $(E,\cI)$ to which we refer as the associated set system of the instance (i.e., $\cI$ contains all subsets $S \subseteq E$ such that for any two distinct tuples $(e_1,\ldots, e_{\ell}), (f_1,\ldots, f_{\ell}) \in S$ and every $i \in [\ell]$ it holds that $e_i \neq f_i$). The instance is associated also  with the parameter $k=\frac{n}{\ell}$, where $n=\sum_{j=1}^{\ell} |U_{\ell}|$. We refer to $|E|$ as the {\em number of tuples} in the instance. 
	The objective is to find  $S\in \cI$ such that $|S| = k$. Let $J=(U_1, \ldots , U_{\ell},E)$ denote an instance of $\ell$-PDM
	We say $J$ is a ``yes'' instance if such a set $S$ exists; otherwise, $J$ is a ``no'' instance. Observe that the parameter~$k$ is set such that  if $S\in \cI$ and $|S| = k$ then every element in $U_1\cup \ldots \cup U_{\ell}$ appears in exactly one of the tuples in $S$. 
	
	\begin{lemma}[Theorem 1.2 cf. \cite{dell2018kernelization_arxiv}]
		\label{lem:dell}
	Let $\ell \geq3$ and $\eps>0$. If  $ \textnormal{coNP}\not\subseteq \textnormal{NP} / \textnormal{poly}$ then $\ell$-\textnormal{PDM} does not have a kernel in which the number of tuples is $O(k^{\ell-\eps})$.
		\end{lemma}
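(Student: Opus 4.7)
The plan is to prove the lower bound via the standard framework of \emph{weak cross-composition} developed by Dell and van Melkebeek and extended by Dell--Marx. The framework delivers the following meta-theorem: if a parameterized problem $Q$ admits a polynomial-time \emph{weak $d$-composition} from an NP-hard problem $L$ (that is, a reduction taking $t$ instances of $L$ of bitlength at most $s$ to a single instance of $Q$ whose parameter is $\widetilde{O}(t^{1/d}\cdot\mathrm{poly}(s))$ and which is a yes-instance iff at least one input is), then $Q$ does not admit a generalized kernel of size $O(k^{d-\eps})$ unless $\textnormal{coNP}\subseteq \textnormal{NP}/\textnormal{poly}$. Taking $Q=\ell\textnormal{-PDM}$ and $d=\ell$, the task reduces to exhibiting a weak $\ell$-composition into $\ell$-PDM from a convenient NP-hard source.

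The source problem I would use is an NP-hard satisfiability problem in a form amenable to ``OR-packing'', for instance $\textsc{Multi-Colored Clique}$ or directly $\ell$-PDM itself on instances of bounded size (after padding, the standard trick to obtain an NP-hard OR-source). Given $t$ instances $J_1,\ldots,J_t$, the key construction lays them out in an $\ell$-dimensional grid of side length roughly $t^{1/\ell}$. Concretely, I would introduce coordinates $U_j = U_j^{(\mathrm{grid})} \cup U_j^{(\mathrm{inst})}$ for each $j\in[\ell]$, where the grid part has size $\Theta(t^{1/\ell})$ and the instance part has size $\mathrm{poly}(s)$. For each index $i\in[t]$, identified with a grid point $(i_1,\ldots,i_\ell)\in[t^{1/\ell}]^{\ell}$, one embeds a gadget that is an isomorphic copy of $J_i$ living in the slice labelled by $(i_1,\ldots,i_\ell)$. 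Additional \emph{selector} tuples force a unique grid point to be ``activated,'' and \emph{padding} tuples cover all the coordinates of all inactive slices. The associated parameter is $k^\star = t^{1/\ell}\cdot \mathrm{poly}(s)$, which is the total number of coordinates that need to be matched.

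Correctness amounts to two claims. First, any perfect $\ell$-dimensional matching of the composed instance $J^\star$ must select exactly one grid point, because the selector gadget together with the counting constraint on $|U_j^{(\mathrm{grid})}|$ leaves no slack; this is the slice-separation step. Second, once a single slice $i^\star$ is activated, the remainder of the matching restricted to $U_j^{(\mathrm{inst})}$ coordinates is a perfect $\ell$-dimensional matching iff $J_{i^\star}$ is a yes-instance. Together these show that $J^\star$ is a yes-instance iff some $J_i$ is, giving the weak $\ell$-composition.

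The main obstacle, as in all Dell--Marx-style arguments, is the slice-separation step: in $\ell$-dimensional matching it is dangerously easy to design tuples that inadvertently mix coordinates from distinct gadget slices, which would let a perfect matching ``cheat'' by stitching together several partial solutions. To prevent this I would equip every grid coordinate with a unique label, and design the selector gadget so that it can only admit tuples whose grid-coordinates form a single consistent $\ell$-tuple in $[t^{1/\ell}]^\ell$; a careful counting argument on $|U_j^{(\mathrm{grid})}|$ relative to the gadget sizes then forces consistency. After this, applying the weak $d$-composition meta-theorem with $d=\ell$ and our target kernel size $O(k^{\ell-\eps})$ yields, via the standard distillation argument of Fortnow--Santhanam and Dell--van Melkebeek, the desired contradiction with $\textnormal{coNP}\not\subseteq\textnormal{NP}/\textnormal{poly}$, completing the proof.
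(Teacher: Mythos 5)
The paper does not prove this statement at all: Lemma~\ref{lem:dell} is imported verbatim as Theorem~1.2 of Dell and Marx \cite{dell2018kernelization_arxiv}, so there is no internal proof to compare your attempt against. Judged on its own terms, your sketch correctly identifies the framework that Dell and Marx actually use --- a weak $d$-composition in the sense of Dell--van Melkebeek, with $d=\ell$, combined with the distillation/compression argument that yields $\textnormal{coNP}\subseteq\textnormal{NP}/\textnormal{poly}$ from a kernel with $O(k^{\ell-\eps})$ tuples. So the overall architecture is right.

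However, as a proof the proposal has a genuine gap precisely at the point you flag as ``the main obstacle.'' The entire content of the Dell--Marx lower bound for Perfect $\ell$-Dimensional Matching lies in constructing the selector and padding gadgets so that (i) the padding tuples can perfectly cover every \emph{inactive} slice but cannot also cover the active one (otherwise the composed instance is always a yes-instance and the composition is vacuous), and (ii) no perfect matching can stitch together partial solutions from several distinct slices via tuples that mix grid coordinates. You assert that ``a careful counting argument on $|U_j^{(\mathrm{grid})}|$ relative to the gadget sizes then forces consistency,'' but no such gadget or counting argument is exhibited, and it is exactly here that naive constructions fail: because a perfect matching must cover \emph{every} element of every $U_j$ exactly once, the budget for padding is completely rigid, and one must verify that the selector leaves a deficit of coordinates that can only be repaired by a genuine solution inside one slice. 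Without specifying the tuples of the selector gadget and carrying out that count, the two correctness claims (``slice separation'' and ``activated slice solves $J_{i^\star}$'') are unsupported, so the composition --- and hence the lower bound --- is not actually established.
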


	\noindent{\bf Proof of Lemma~\ref{lem:DM}:} 
	Assume $ \textnormal{coNP}\not\subseteq \textnormal{NP} / \textnormal{poly}$. Furthermore, 
	assume towards a contradiction that there is a function $f:\mathbb{N} \rightarrow \mathbb{N}$, constants $c_1,c_2$, where $c_2-c_1<0$, and an algorithm~$\cA$ that, given a \textnormal{BM} instance $I = (E, \cm, c,p, B,k,\ell)$  and $0<\eps<\frac{1}{2}$, finds in time $|I|^{O(1)}$ a representative set of $I$ and $\eps$ of size $O \left( f(\ell)  \cdot k^{\ell-c_1} \cdot \frac{1}{\eps^{c_2}}\right)$. We use $\cA$ to construct a kernel for $3$-PDM. 
	
%	\ariel{maybe convert it to a more formal algorithm? I am not sure...}
	Consider the following kernelization algorithm for $3$-PDM. Let $J=(U_1,U_2,U_3,E)$ be the $3$-PDM  input instance. Define $n=|U_1|+|U_2|+|U_3|$, $\ell =3$, and  $k=\frac{n}{\ell}$. 
	Furthermore, let $(E,\cI)$ be the set system associated with the instance, and let $\cm$ be an $\ell$-matchoid representing the set system $(E,\cI)$.
	Run~$\cA$ on the BM instance $I=(E,\cm,\c,p,B,k,\ell)$ with $\eps =\frac{1}{3k}$,  where $c(e)=p(e)=1$ for all $e\in E$ and $B=k$.  Let $R\subseteq E$ be the output of~$\cA$. Return the $3$-PDM instance  $J'=(U_1,U_2,U_3,R)$.
		
Since~$\cA$ runs in polynomial time, the above algorithm runs in polynomial time as well.  Moreover, as $k= \frac{n}{3}$ and $R\subseteq E$, it follows that the returned instance can be encoded using $O(k^4)$ bits. Let $(R,\cI')$ be the set system associated with $J'$.
Since $R\subseteq E$, it follows that $\cI'\subseteq \cI$. Hence, if there is $S\in \cI'$ such that $|S|=k$, then $S\in \cI$ as well. That is,  
if $J'$ is a ``yes'' instance, so is $J$. 

For the other direction, assume that $J$ is a ``yes'' instance. That is, there is $S\in \cI$ such that $|S|=k$. Then $S$ is a solution for the BM instance $I$ (observe that $c(S)=|S|=k=B$). Therefore, as $R$ is a representative set of $I$ and $\eps=\frac{1}{3k}$, there is a solution $T$ for $I$ such that $T\subseteq R$, and 
$$
p(T) \geq (1-2\eps)\cdot \OPT(I)\geq \left(1-2\eps\right)\cdot p(S)
 = \left(1-\frac{2}{3k}\right)\cdot p(S)=  \left(1-\frac{2}{3k}\right)\cdot k = k-\frac{2}{3}.$$
Since the profits are integral we have that $|T|=p(T)\geq k$. Furthermore $|T|\leq k$ (since $T$ is a solution for $I$), and thus $|T|=k$. Since $T\in \cI$ (as $T$ is a  solution for $I$) and $T\subseteq R$, it trivially holds that $T\in \cI'$. That is, $T\in \cI'$ and $|T|=k$. Hence, $J'$ is a ``yes'' instance. We have showed that the above procedure is indeed a kernelization for $3$-PDM. 

Now, consider the size of $R$.
Since $\cA$ returns a representative set of size $O \left( f(\ell)  \cdot k^{\ell-c_1} \cdot \frac{1}{\eps^{c_2}}\right)$ it follows that 
$$
|R| = O\left(  f(3)  \cdot k^{3-c_1} \cdot (3k)^{c_2} \right) = O\left( k^{3-c_1+c_2} \right).
$$
As $c_2-c_1<0$, we have a contradiction to \Cref{lem:dell}. Thus, for any function $f:\mathbb{N} \rightarrow \mathbb{N}$ and  constants $c_1,c_2$ satisfying  $c_2-c_1<0$, there is no  algorithm which finds for a given \textnormal{BM} instance $I = (E, \cm, c,p, B,k,\ell)$  and $0<\eps<\frac{1}{2}$ a representative set of $I$ and $\eps$ of size $O \left( f(\ell)  \cdot k^{\ell-c_1} \cdot \frac{1}{\eps^{c_2}}\right)$
in time $|I|^{O(1)}$.
\qed

\section{A Polynomial Time $\frac{1}{2\cdot \ell}$-Approximation for BM}
\label{sec:proofs}

In this section we prove \Cref{lem:CA}. The proof combines an existing approximation algorithm for the unbudgeted version of BM \cite{jenkyns1976efficacy,jukna2011extremal} with the Lagrangian relaxation technique of \cite{kulik2021lagrangian}.  As the results in \cite{jenkyns1976efficacy,jukna2011extremal}  are presented in the context of {\em $\ell$-extendible } set systems, we first define these systems and use a simple argument to show that such systems are generalizations of matchoids. We refer the reader to \cite{feldman2011improved} for further details about $\ell$-extendible systems. 
 \begin{definition}
	\label{def:EX}
	Given a finite set $E$, $\cI \subseteq 2^E$, and $\ell \in \mathbb{N}$, we say that $(E,\cI)$ is an {\em $\ell$-extendible system} if for every $S \in \cI$ and $e \in E \setminus S$ there is $T \subseteq S$, where $|T| \leq \ell$, such that $(S \setminus T) \cup \{\ell\} \in \cI$. 
\end{definition}

The next lemma shows that an \lm is in fact an $\ell$-extendible set system.
%; the proof follows from the definition of a matchoid. 

\begin{lemma}
	\label{lem:EX}
	For any $\ell \in \mathbb{N}_{>0}$ and an $\ell$-Matchoid  $\cm = \left\{  M_i = (E_i, \cI_i) \right\}_{i \in [s]}$ on a set $E$, it holds that $(E,\cI(\cm))$ is an $\ell$-extendible set system. 
\end{lemma}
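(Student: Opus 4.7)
The plan is to prove $\ell$-extendibility directly from the definitions, exploiting the fact that each element of the ground set participates in at most $\ell$ of the matroid constraints. Fix an arbitrary $S \in \cI(\cm)$ and $e \in E \setminus S$, and let $N(e) = \{i \in [s] \mid e \in E_i\}$. By the definition of an $\ell$-matchoid we have $|N(e)| \leq \ell$, and this is the only place where the $\ell$-matchoid hypothesis will be invoked.

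First I will construct a candidate removal set $T \subseteq S$ by contributing at most one element per matroid in $N(e)$. For each $i \in N(e)$, consider $S \cap E_i \in \cI_i$. If $(S \cap E_i) \cup \{e\} \in \cI_i$, contribute nothing. Otherwise, by the standard matroid circuit property there is a (unique) circuit $C_i$ of $M_i$ contained in $(S \cap E_i) \cup \{e\}$ with $e \in C_i$, so any element $t_i \in C_i \setminus \{e\} \subseteq S \cap E_i$ satisfies $((S \cap E_i) \setminus \{t_i\}) \cup \{e\} \in \cI_i$; contribute one such $t_i$ to $T$. By construction $T \subseteq S$ and $|T| \leq |N(e)| \leq \ell$.

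Next I will verify $(S \setminus T) \cup \{e\} \in \cI(\cm)$ by checking each matroid $M_j = (E_j, \cI_j)$ in turn. If $j \notin N(e)$, then $e \notin E_j$, so $((S \setminus T) \cup \{e\}) \cap E_j = (S \cap E_j) \setminus T$, which is a subset of $S \cap E_j \in \cI_j$ and therefore lies in $\cI_j$ by the hereditary property. If $j \in N(e)$, then $((S \setminus T) \cup \{e\}) \cap E_j = ((S \cap E_j) \setminus T) \cup \{e\}$. By the choice made for $j$ there is a set $A_j \in \cI_j$ (either $(S \cap E_j) \cup \{e\}$ when no removal was needed, or $((S \cap E_j) \setminus \{t_j\}) \cup \{e\}$ otherwise) containing $((S \cap E_j) \setminus T) \cup \{e\}$, and hereditary again yields membership in $\cI_j$.

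The argument is mostly bookkeeping, and the only subtle point to articulate clearly is the one handled in the last step: even though $T$ may contain elements $t_i$ (for $i \neq j$) that happen to lie in $E_j$, these extra removals are harmless because the hereditary property of $M_j$ absorbs them once a witness independent set $A_j$ has been produced. I do not anticipate a real obstacle; the $\ell$-matchoid hypothesis enters exactly once, in bounding $|T|$ by $|N(e)| \leq \ell$.
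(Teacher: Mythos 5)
Your proof is correct and follows essentially the same route as the paper's: identify the at most $\ell$ matroids whose ground set contains $e$, remove at most one element of $S$ per such matroid to restore independence, and conclude via the hereditary property. The only cosmetic difference is that you extract the removable element from the fundamental circuit of $e$ in $(S\cap E_i)\cup\{e\}$, whereas the paper obtains it by repeated application of the exchange axiom; your explicit final check that extra removals from $T$ are absorbed by heredity is a detail the paper glosses over, and it is handled correctly.
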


\begin{proof}
	Let $S \in \cI(\cm)$ and $e \in E \setminus S$. As $\cm$ is an $\ell$-matchoid, there is $H \subseteq [s]$ of cardinality $|H| \leq \ell$ such that for all $i \in [s] \setminus H$ it holds that $e \notin E_i$ and for all $i \in H$ it holds that $e \in E_i$. Since for all $i \in H$ it holds that $(E_i,\cI_i)$ is a matroid, either $(S \cap E_i) \cup \{e\} \in \cI_i$, or there is $a_i \in S \cap E_i$ such that  $((S \cap E_i) \setminus \{a_i\}) \cup \{e\} \in \cI_i$ (this follows by repeatedly adding elements from $S \cap E_i$ to $\{e\}$ using the exchange property of the matroid $(E_i,\cI_i)$). Let $L = \{i \in H~|~(S \cap E_i) \cup \{e\} \notin \cI_i\}$. Then, there are $|L|$ elements $T = \{a_{i}\}_{i \in L}$ such that for all $i \in L$ it holds that $((S \cap E_i) \setminus \{a_i\}) \cup \{e\} \in \cI_i$ and for all $i \in H \setminus L$ it holds that $(S \cap E_i) \cup \{e\} \in \cI_i$. Thus, it follows that $(S \setminus T) \cup \{e\} \in \cI(\cm)$ by the definition of a matchoid. Since $|T| = |L| \leq |H| \leq \ell$, we have the statement of the lemma. 
	%are at most $\ell$ sets $E_{i_1}, E_{i_2}, \ldots, E_{i_{t}}, t \leq \ell$ such that $e \in E_{i_j}$
\end{proof}

%\begin{lemma}
%	\label{lem:CA}
%	%There is an $\ell$-approximation for \textnormal{BM}. 
%	There is an algorithm $\textnormal{\textsf{ApproxBM}}$ that given a \textnormal{BM} instance $I = (E, \cm, c,p, B,k,\ell)$ returns in time $|I|^{O(1)}$ a value $\alpha(I)$ such that $\OPT(I) \geq \alpha \geq \frac{\OPT(I)}{2\ell}$. 
%\end{lemma}
%
%The proof of Lemma~\ref{lem:CA} is given in Section~\ref{sec:proofs}. 

\noindent {\bf Proof of Lemma~\ref{lem:CA}:} Consider the BM problem with no budget constraint (equivalently, $B>c(E)$) that we call the {\em maximum weight matchoid maximization (MWM)} problem. By Lemma~\ref{lem:EX}, MWM is a special case of the {\em maximum weight $\ell$-extendible system maximization} problem, which admits $\frac{1}{\ell}$-approximation \cite{jenkyns1976efficacy,jukna2011extremal}.\footnote{The algorithm of \cite{jenkyns1976efficacy} can be applied also in the
more general setting of $\ell$-systems. For more details on such set systems, see, e.g., \cite{feldman2011improved}.} Therefore, 
using a technique of~\cite{kulik2021lagrangian}, we have the following. There is an algorithm that, given some $\eps>0$, returns a solution for the BM instance $I$ of profit at least $\left(  \frac{\frac{1}{\ell}}{\frac{1}{\ell}+1} -\eps \right) \cdot \OPT(I)$, and whose running time is $|I|^{O(1)} \cdot O(\log(\eps^{-1}))$. Now, we can set $\eps = \frac{\frac{1}{\ell}}{\frac{1}{\ell}+1}-\frac{1}{2\ell}$; then, the above algorithm has a running time $|I|^{O(1)}$, since $\eps^{-1}$ is polynomial in $\ell$ and $\ell \leq |I|$. Moreover, the algorithm returns a solution $S$ for $I$, such that $$\OPT(I) \geq p(S) \geq \left(  \frac{\frac{1}{\ell}}{\frac{1}{\ell}+1} -\eps \right) \cdot \OPT(I)  = \frac{1}{2\ell} \cdot \OPT(I).$$%\geq \frac{1}{2 \ell}  \cdot \OPT(I).$$
To conclude, we define the algorithm $\textnormal{\textsf{ApproxBM}}$ which returns $\alpha = p(S)$. By the above discussion, $\OPT(I) \geq \alpha \geq \frac{\OPT(I)}{2\ell}$, and the running time of $\textnormal{\textsf{ApproxBM}}$ is 
$|I|^{O(1)}$.  \qed

\section{Discussion}
\label{sec:discussion}

In this paper we present an FPT-approximation scheme (FPAS) for the budgeted $\ell$-matchoid  problem (BM). As special cases, this yields FPAS for 
 the budgeted $\ell$-dimensional matching problem  (BDM) and the budgeted $\ell$-matroid intersection problem (BMI). While the unbudgeted version of BM has been studied earlier from parameterized viewpoint, the budgeted version is studied here for the first time. 

 We show that BM parameterized by the solution size is $W[1]$-hard already with a degenerate matroid constraint (Lemma~\ref{lem:hard}); thus, an exact FPT time algorithm is unlikely to exist. Furthermore, the special case of unbudgeted $\ell$-dimensional matching problem is APX-hard, already for $\ell=3$, implying that  
 PTAS for this problem is also unlikely to exist.
 These hardness results motivated the  development of an FPT-approximation scheme for BM.
 
Our FPAS relies on the notion of representative set $-$ a small cardinality  subset of the ground set of the original instance which  preserves the optimum value up to a small factor. We note that representative sets are not {\em lossy kernels} \cite{lokshtanov2017lossy} as BM is defined in an oracle model;  thus, the definitions of kernels or lossy kernels do not apply to our problem. Nevertheless, for some variants of BM in which the input is given explicitly (for instance, this is possible for BDM) our construction of  representative sets can be used to obtain an approximate kernelization scheme. 

Our results also include a lower bound on the minimum possible size of a representative set for BM which can be computed in polynomial time (\Cref{lem:DM}). The lower bound is based on the special case of the budgeted $\ell$-dimensional matching problem (BDM). We note that there is a significant gap between the size of the representative sets found in this paper and the lower bound. This suggests the following questions for future work.
\begin{itemize}
	\item Is there a representative set for the special case of BDM whose size matches the lower bound given in \Cref{lem:DM}?
	\item Can the generic structure of $\ell$-matchoids be used to derive an improved lower bound on the size of a representative set for general BM instances?
\end{itemize}

The budgeted $\ell$-matchoid problem can be naturally  generalized to the $d$-budgeted $\ell$-matchoid problem ($d$-BM). In the $d$-budgeted version, both the costs and the budget are replaced by $d$-dimensional vectors, for some constant $d\geq 2$.  
A subset of elements is feasible if it is an independent set of the $\ell$-matchoid, and the total cost of the elements in each dimension is bounded by the budget in this dimension. The problem is a generalization of the $d$-dimensional knapsack problem ($d$-KP), the special case of $d$-BM in which the feasible sets of the matchoid are all subsets of $E$. A PTAS for $d$-KP  was first given in~\cite{FC84}, and the existence of an {\em efficient}
polynomial time approximation scheme was ruled out in \cite{kulik2010there}.  
PTASs for the special cases of $d$-BM in which the matchoid is a single matroid, matroid intesection or a matching constraint were given in \cite{chekuri2011multi,grandoni2010approximation}. 
It is likely that the  lower bound in~\cite{kulik2010there} can be used also to rule out the existence of an FPAS for $d$-BM. However, the question whether $d$-BM admits a 
$(1-\eps)$-approximation in time $O\left( f(k+\ell) \cdot n^{g(\eps)}\right)$, for some functions $f$ and $g$, remains open. 
\bibliography{budgeted}
%\appendix
%\input{appendix}
%

\end{document}